\newtheorem{pro}{\textit{Proposition}}
\newtheorem{rem}{\textit{Remark}}
\definecolor{dblue}{rgb}{0,0,0.8}
\newcommand\blfootnote[1]{%
  \begingroup
  \renewcommand\thefootnote{}\footnote{#1}%
  \addtocounter{footnote}{-1}%
  \endgroup}
\begin{document}
\title{Spectral Efficiency of \\ One-Bit Sigma-Delta Massive MIMO
}
\author{Hessam Pirzadeh, \emph{Student Member, IEEE}, Gonzalo Seco-Granados, \emph{Senior Member, IEEE}, Shilpa Rao, \emph{Student Member, IEEE}, and A. Lee Swindlehurst, \emph{Fellow, IEEE}}

\maketitle
\begin{abstract}
We examine the uplink spectral efficiency of a massive MIMO base station employing a one-bit Sigma-Delta ($\Sigma\Delta$) sampling scheme implemented in the spatial rather than the temporal domain. Using spatial rather than temporal oversampling, and feedback of the quantization error between adjacent antennas, the method shapes the spatial spectrum of the quantization noise away from an angular sector where the signals of interest are assumed to lie. It is shown that, while a direct Bussgang analysis of the $\Sigma\Delta$ approach is not suitable, an alternative equivalent linear model can be formulated to facilitate an analysis of the system performance. The theoretical properties of the spatial quantization noise power spectrum are derived for the $\Sigma\Delta$ array, as well as an expression for the spectral efficiency of maximum ratio combining (MRC). Simulations verify the theoretical results and illustrate the significant performance gains offered by the $\Sigma\Delta$ approach for both MRC and zero-forcing receivers.
\end{abstract}

\begin{IEEEkeywords}
Massive MIMO, one-bit ADCs, sigma-delta, spectral efficiency.
\end{IEEEkeywords}
\blfootnote{This work was supported by the U.S. National Science Foundation under Grants CCF-1703635 and ECCS-1824565. The work of G. Seco-Granados has been partially funded by the Spanish Ministry of Science, Innovation and Universities under projects TEC2017-
89925-R and PRX18/00638 and by the ICREA Academia programme.

H. Pirzadeh, S. Rao, and A. L. Swindlehurst are with the Center for Pervasive
Communications and Computing, University of California, Irvine, CA 92697
USA (e-mail: hpirzade@uci.edu; shilpar1@uci.edu; swindle@uci.edu).

G. Seco-Granados is with the Department of Telecommunications and
Systems Engineering, Universitat Aut\`{o}noma de Barcelona, 08193 Barcelona,
Spain (e-mail: gonzalo.seco@uab.cat).}


\section{Introduction}\label{sec:Introduction}
\IEEEPARstart{T}{o} reduce complexity and energy consumption in large-scale MIMO systems, researchers and system designers have recently considered implementations with low-resolution analog-to-digital and digital-to-analog converters (ADCs, DACs).  Compared to hybrid analog/digital approaches, fully digital architectures, even with low-resolution sampling, provide increased flexibility and fully exploit the potentially large array gain promised by massive MIMO systems. The case of one-bit quantization has received the most attention, both for the uplink \cite{Fan,JZhang,juncil2015near,MollenCLH17,Li,JacobssonDCGS17,Dong,ShaoLL18,JeonLHH18} and downlink \cite{Nossek2009Transmit,SaxenaFS17,LiTSML17,CastanedaJDCGS17,Jacobsson2017,Swindlehurst2017,LandauL17,JeddaMSN18,li2018massive,Sohrabi2018,shao2018framework} scenarios. 

While one-bit ADCs and DACs offer the greatest simplicity and power savings, they also suffer the greatest performance loss compared to systems with higher resolution sampling, particularly for moderate to high signal-to-noise ratios (SNRs), and in situations with strong interference. Besides simply increasing the ADC/DAC resolution, mixed-ADC architectures \cite{Tan,JZhang2,Hessam_lett,Park} and temporal oversampling \cite{gok2017,Ucuncu2018,schluter2018,Galton_rev,Palguna} have been proposed to bridge the performance gap, with a corresponding increase in complexity and power consumption.

Oversampled one-bit quantization has a long history in digital signal processing, particularly using the so-called Sigma-Delta ($\Sigma\Delta$) approach, which quantizes the difference ($\Delta$) between the signal and its previously quantized value, and then integrates ($\Sigma$) the resulting output \cite{aziz1996overview,Baird,Galton}. 
This has the effect of shaping the quantization noise to higher frequencies, while the signal occupies the low end of the spectrum due to the oversampling.  Higher-order  $\Sigma\Delta$ modulators can be constructed that provide increased shaping of the quantization noise from low to high frequencies. Compared with a standard one-bit ADC, a $\Sigma\Delta$ ADC requires additional digital circuitry to implement the integration, but very little additional RF hardware. $\Sigma\Delta$ ADCs have been commonly used in process control and instrumentation applications, and more recently in the implementation of multi-channel beamformers for ultrasound imaging systems.

The concept of $\Sigma\Delta$ modulation can also be applied in the spatial as well as the temporal domain. In a spatial $\Sigma\Delta$ implementation, the difference signal is formed by subtracting the quantized output of one antenna’s RF chain from the signal at an adjacent antenna.  Coupled with spatial oversampling ({\em e.g.,} a uniform linear array with elements separated by less than one half wavelength), the quantization noise is shaped to higher {\em spatial} frequencies, and significantly reduced for signals arriving in a sector around broadside ($0^\circ$). Applying a phase shift to the feedback signal allows one to move the band of low quantization error to different angular regions. 

Relatively little research has focused on the spatial $\Sigma\Delta$ architecture. Prior related work has dealt with phased-array beamforming \cite{scholnik2004spatio,kriegerDense}, generalized structures for interference cancellation \cite{Venk2011}, and circuit implementations \cite{nikoofard,madan2017}. Applications of the idea to massive MIMO were first presented in \cite{Corey_Sig,baracspatial}, and more recently algorithms have been developed for channel estimation \cite{RaoSP19} and transmit precoding using $\Sigma\Delta$ DACs \cite{ShaoMLS19}. 

In this paper, we study the uplink spectral efficiency (SE) of a massive MIMO basestation (BS) that employs one-bit spatial $\Sigma\Delta$ quantization, and compare it with the performance achievable by systems with infinite resolution and standard one-bit quantization for maximum ratio combining (MRC) and zero-forcing (ZF) receivers. Past work on quantifying the SE for standard one-bit quantization ({\em e.g.}, \cite{Li,LiTSML17}) has relied on a vectorized version of the well-known Bussgang decomposition \cite{Bussgang}, which formulates an equivalent linear vector model for the array of non-linear quantizers assuming that the inputs to the quantizers are (at least approximately) jointly Gaussian. However, the vector Bussgang solution is not appropriate for the more complicated $\Sigma\Delta$ architecture, since it leads to a linear model that is inconsistent with the corresponding hardware implementation. Thus, we are led to derive an alternative linear model in which we apply a scalar version of the Bussgang approach to each quanitizer individually. This model is then used in turn to determine the overall sum SE. 

The results of the analysis indicate the significant gain of the $\Sigma\Delta$ approach compared with standard one-bit quantization for users that lie in the angular sector where the shaped quantization error spectrum is low. For MRC, the one-bit $\Sigma\Delta$ array performs essentially the same for such users as a BS with infinite resolution ADCs. The angular sectorization of users in the spatial domain is not necessarily a drawback in cellular implementations, where cells are typically split into $120^\circ$ regions using different arrays on the BS tower. In addition, there are many small-cell scenarios both indoors and outdoors where the targeted users are confined to relatively narrow angular sectors (auditoriums, plazas, arenas, etc.). Such situations will become even more prevalent as frequencies move to the millimeter wave band. However, the size of the sector of good performance for $\Sigma\Delta$ arrays depends on the amount of spatial oversampling. Unlike the temporal case, where oversampling factors of 10 or higher are not uncommon, the physical dimensions of the antenna and the loss due to increased mutual coupling for closely-spaced antennas places a limit on the amount of spatial oversampling that is possible in massive MIMO. Fortunately, our results indicate that spatial oversampling by factors of only 2-4 is sufficient to achieve good performance for angular sectors ranging from $80^\circ-150^\circ$.  Furthermore, the ability of the $\Sigma\Delta$ array to electronically steer the desired angular sector by means of the feedback phase shift provides desirable flexibility. For example, multiple sectors could be serviced in parallel with a single antenna array by deploying a bank of $\Sigma\Delta$ receivers tuned to different spatial frequencies, in order to cover a wider angular region.

In the next section we outline the basic system model, and provide some background on temporal $\Sigma\Delta$ modulation. In Section III, we introduce the spatial $\Sigma\Delta$ architecture. We develop an equivalent linear model and characterize this architecture in Section IV. The model is then applied to analyze the spectral efficiency of the $\Sigma\Delta$ array in Section V. While the analysis is conducted assuming that perfect channel state information (CSI) is available, we also discuss the impact of imperfect CSI in Section VI. Several simulation results are presented in Section VI, followed by our conclusions.

\emph{Notation}: 
We use boldface letters to denote vectors, and capitals to denote matrices. The symbols $(.)^*$, $(.)^T$, $(.)^H$, and $(.)^{\dagger}$ represent conjugate, transpose, conjugate transpose, and pseudo-inverse, respectively. A circularly-symmetric complex Gaussian (CSCG) random vector with zero mean and covariance matrix ${\boldsymbol{R}}$ is denoted $\boldsymbol{n}\sim\mathcal{CN}(\mathbf{0},\boldsymbol{{R}})$. The symbol $\|.\|$ represents the Euclidean norm. The identity matrix is denoted by $\mathit{\boldsymbol{I}}$, vector of
all ones by $\boldsymbol{1}$, and the expectation operator by $\mathbb{E}\left[.\right]$. We use $\mathrm{diag}\left(\boldsymbol{C}\right)$, $\mathrm{diag}\left(\boldsymbol{x}\right)$, and $\mathrm{diag}\left(x_1,\cdots,x_M\right)$ as the diagonal matrix formed from the diagonal entries of the square matrix $\boldsymbol{C}$, elements of vector $\boldsymbol{x}$, and scalars $x_1,\cdots,x_M$, respectively.  
For a complex value, $x=x_r+jx_i$, we define $x_r=\mathfrak{Re}\left[x\right]$ and $x_i=\mathfrak{Im}\left[x\right]$.


\section{System Model}\label{sec:SYSTEM MODEL}

Consider the uplink of a single-cell multi-user MIMO system consisting of $K$ single-antenna users that send their signals simultaneously to a BS equipped with a uniform linear array (ULA) with ${M}$ antennas. The $M\times 1$ signal received at the BS from the $K$ users is given by
\begin{equation}\label{system-model}
  {\boldsymbol{x}}=\boldsymbol{G}\boldsymbol{P}^{\frac{1}{2}}
  \boldsymbol{{s}} + {\boldsymbol{n}},
\end{equation}
where $\boldsymbol{G}=\left[\boldsymbol{g}_1,\cdots,\boldsymbol{g}_K\right]\in\mathbb{C}^{M\times K}$ is the channel matrix between the users and the BS and $\boldsymbol{P}$ is a diagonal matrix whose $k$th diagonal element, $p_k$, represents the transmitted power of the $k$th user. The symbol vector transmitted by the users is denoted by $\boldsymbol{s}\in\mathbb{C}^{K\times 1}$ where $\mathbb{E}\left\{\boldsymbol{s}\boldsymbol{s}^H\right\}=\boldsymbol{I}_K$ and is drawn from a circularly symmetric complex Gaussian (CSCG) codebook independent of the other users,
and, ${\boldsymbol{n}}\sim\mathcal{CN}\left(\boldsymbol{0},\sigma_n^2\boldsymbol{I}_{M}\right)$ denotes additive CSCG receiver noise at the BS. 

We consider a physical channel model described in the angular domain and comprised of $L$ paths for each user with azimuth angular spread $\Theta$ \cite{Ngo_PHY}. In particular, for the $k$th user, the channel vector is modeled as
\begin{equation}\label{channel model_phy}
\boldsymbol{g}_k=\sqrt{\frac{\beta_k}{L}}\boldsymbol{A}_k\boldsymbol{h}_k,	
\end{equation}
where $\boldsymbol{A}_k$ is an $M\times L$ matrix whose $\ell$th column is the array steering vector corresponding to the direction of arrival (DoA) $\theta_{k\ell}\in\theta_0+\left[-\frac{\Theta}{2},\frac{\Theta}{2}\right]$, $\beta_k$ models geometric attenuation and shadow fading from the $k$th user to the BS, and the elements of $\boldsymbol{h}_k\in\mathbb{C}^{L\times 1}$ are assumed to be distributed identically and independently as $\mathcal{CN}\left({0},1\right)$, and model the fast fading propagation. For a ULA, the steering vector for a signal with DoA $\theta_{k\ell}$ is expressed as
\begin{equation}\label{steering vector}
\boldsymbol{a}\left(u_{k\ell}\right)=\begin{bmatrix}
        1,z_{k\ell}^{-1},\cdots,z_{k\ell}^{-(M-1)}
\end{bmatrix}^T,	
\end{equation}
where $u_{k\ell}=\mathrm{sin}\left(\theta_{kl}\right)$, $z_{k\ell}=e^{j\omega_{s_{k\ell}}}$, and thus $\omega_{s_{k\ell}}=2\pi\frac{d}{\lambda}u_{k\ell}$ represents the spatial frequency assuming antenna spacing $d$ and wavelength $\lambda$. 

In a standard implementation involving one-bit quantization, each antenna element at the BS is connected to a one-bit ADC. In such systems, the received baseband signal at the $m$th antenna becomes
\begin{equation}\label{quantizer}
    y_m = \mathcal{Q}_m\left(x_m\right) \; ,
\end{equation}
where $\mathcal{Q}_m\left(.\right)$ denotes the one-bit quantization operation which is applied separately to the real and imaginary parts as
\begin{equation}
\mathcal{Q}_m\left(x_m\right)=\alpha_{m,r}\mathrm{sign}\left(\mathfrak{Re}\left(x_m\right)\right)+j\alpha_{m,i}\mathrm{sign}\left(\mathfrak{Im}\left(x_m\right)\right) \; ,
\end{equation}
where $\alpha_{m,r}$ and $\alpha_{m,i}$ represent the output voltage levels of the one-bit quantizer. We will allow these levels to be a function of the antenna index $m$, unlike most prior work which assumes that the output levels are the same for all antennas. The necessity for this more general approach will become apparent later\footnote{While the one-bit ADC output levels will be optimized, this is a one-time optimization and the values do not change as a function of the user scenario or channel realization. Thus the ADCs are still truly ``one-bit.''}. Finally, the received baseband signal at the BS is given by
\begin{equation}\label{vector-quantizer}
    \boldsymbol{y}=\mathcal{Q}\left(\boldsymbol{x}\right) = 
    \begin{bmatrix}
           \mathcal{Q}_1\left(x_1\right),
           \mathcal{Q}_2\left(x_2\right),
           \cdots,
           \mathcal{Q}_M\left(x_M\right)
    \end{bmatrix}^T.
\end{equation}

\section{$\Sigma\Delta$ Architecture}\label{sec:sd-architecture}
\subsection{Temporal $\Sigma\Delta$ Modulation}\label{sec: temporal-sigma-delta}
In this subsection, we elaborate on temporal $\Sigma\Delta$ modulation to clarify the noise shaping characteristics of this technique. 
\begin{figure}
    \centering
    \subfigure[\mbox{\rm }]
    {
        \includegraphics[width=0.5\textwidth]{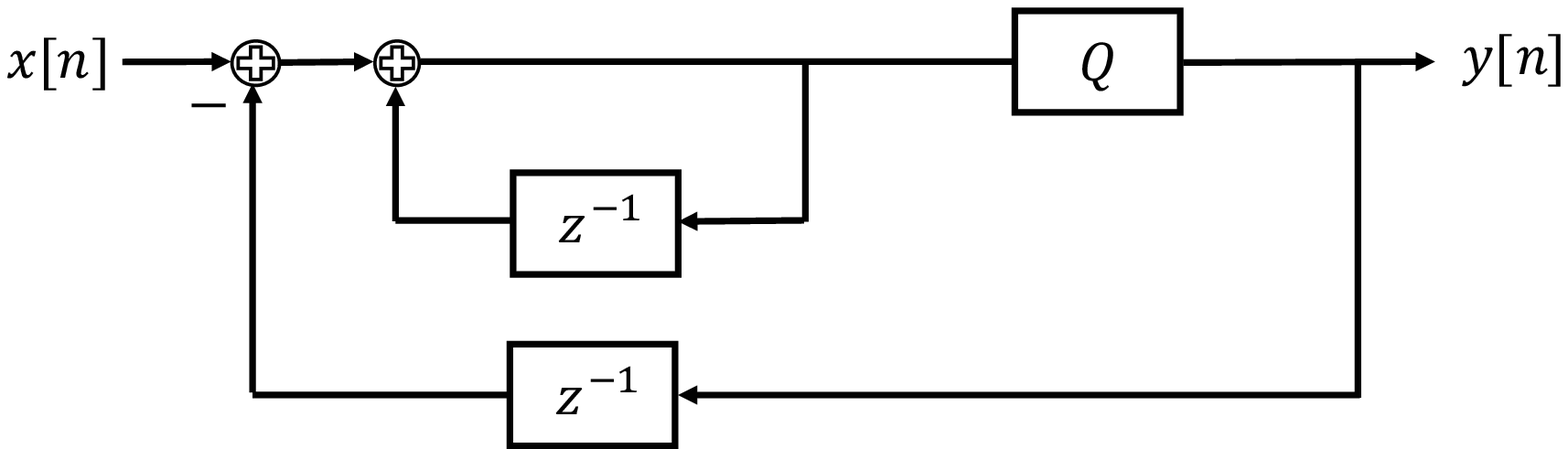}
        \label{pic1}
    }
    \\
    \subfigure[\mbox{\rm }]
    {
        \includegraphics[width=0.5\textwidth]{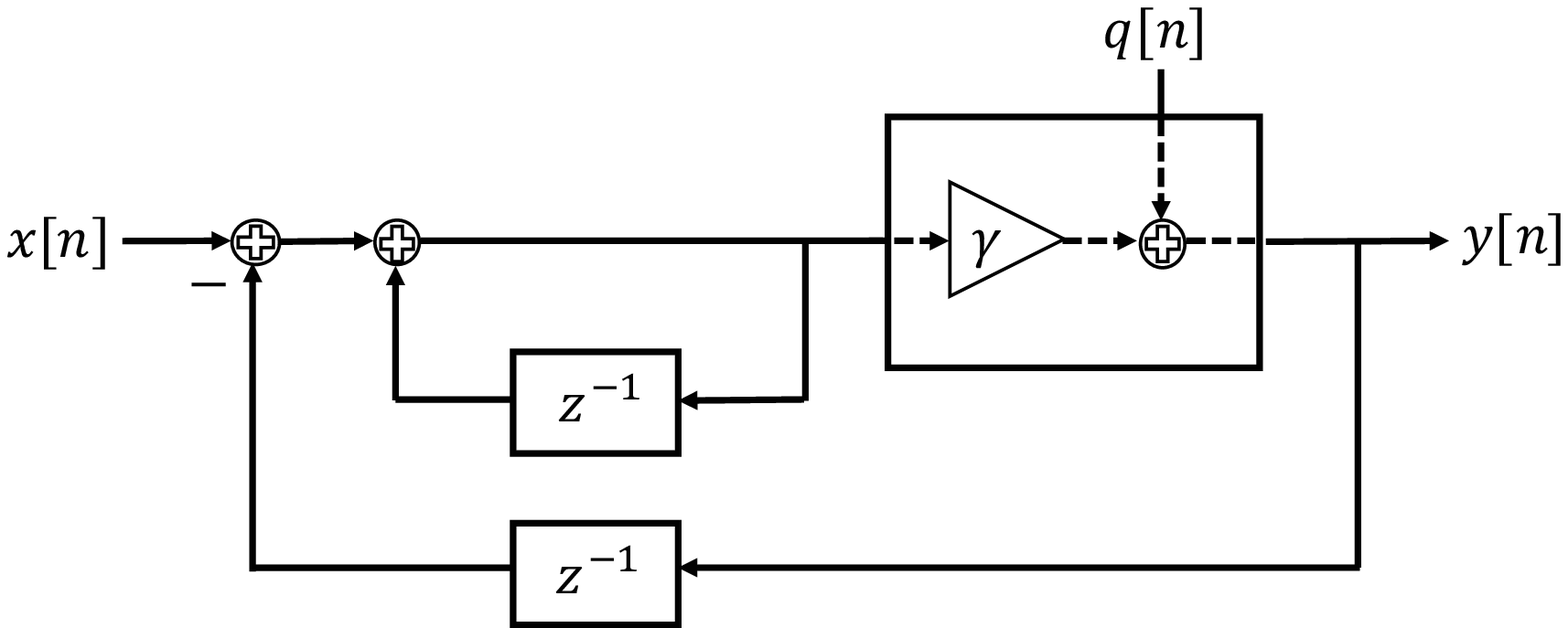}
        \label{pic2}
    }
    \caption{(a) Block diagram for temporal $\Sigma\Delta$ modulator. (b) With equivalent linear model for quantization.}
    \label{pic12}
\end{figure}
Fig. \ref{pic1} shows a block diagram representing the temporal $\Sigma\Delta$ modulator. To shape the quantization noise, the output signal is fed back and subtracted form the input ($\Delta$-stage), and then this error is integrated ($\Sigma$-stage). To characterize the transfer function of this non-linear system, we substitute the one-bit quantizer with the equivalent linear model depicted in Fig. \ref{pic2}. In Fig. \ref{pic2}, the \emph{equivalent gain of the non-linear device}, $\gamma$, is a function of the quantizer's output level and is chosen to make the input of the quantizer uncorrelated with the equivalent quantization noise, $q[n]$. This is a common approach for modeling non-linear systems \cite{Booton,Bussgang}. Therefore,
the input-output relationship of the $\Sigma\Delta$ quantizer can then be written as
\begin{equation}\label{transfer-function}
    Y\left(z\right)=\frac{\gamma}{1-\left(1-\gamma\right)z^{-1}}X\left(z\right) + \frac{\left(1-z^{-1}\right)}{1-\left(1-\gamma\right)z^{-1}}Q\left(z\right),
\end{equation}
where $X\left(z\right)=\sum_{n=0}^{\infty}{x\left[n\right]z^{-n}}$ denotes the $z$-transform. Simply stated, the objective of $\Sigma\Delta$ modulation is to pass the signal through an all-pass filter and the quantization noise through a high-pass filter. This objective can be realized by selecting the output voltage level of the quantizer such that $\gamma\approx1$. Since commercial quantizers are provided with a built-in automatic gain control (AGC), the $\gamma\approx 1$ condition is inherently satisfied in implementations of temporal $\Sigma\Delta$ modulators, and hence this issue is not generally discussed in the literature. However, as we show in the next subsection, the choice of the scaling factor is critical in the mathematical modeling of spatial $\Sigma\Delta$ architectures, and we derive a criterion for addressing this issue.

\subsection{One-Bit Spatial $\Sigma\Delta$ Modulation}\label{sec: spatial-sigma-delta}
As mentioned earlier, the basic premise of temporal $\Sigma\Delta$ modulation can be adopted in the angle domain, in order to {\em spatially} shape the quantization noise in a desired way. Instead of forming the $\Delta$ component using a delayed sample of the quantized input as in the temporal case, we use the quantization error signal from an adjacent antenna. A direct transfer of the temporal $\Sigma\Delta$ idea to the angle domain as in \cite{Corey_Sig,baracspatial} pushes the quantization noise to higher spatial frequencies, which correspond to DoAs away from the array broadside ($|\theta|\gg 0^\circ$), while the oversampling (reduced $d/\lambda$) pushes signals of interest near broadside closer to zero spatial frequency. However, by phase-shifting the quantization error in the feedback loop prior to the $\Delta$ stage, a $\Sigma\Delta$ frequency response can be obtained in which the quantization error is shaped away from a band of frequencies not centered at zero. This bandpass approach has been proposed for both the temporal ({\em e.g.,} see \cite{aziz1996overview}) and spatial \cite{ShaoMLS19} versions of the $\Sigma\Delta$ architecture.

\begin{figure}
\centering
\includegraphics[width=0.5\textwidth]
{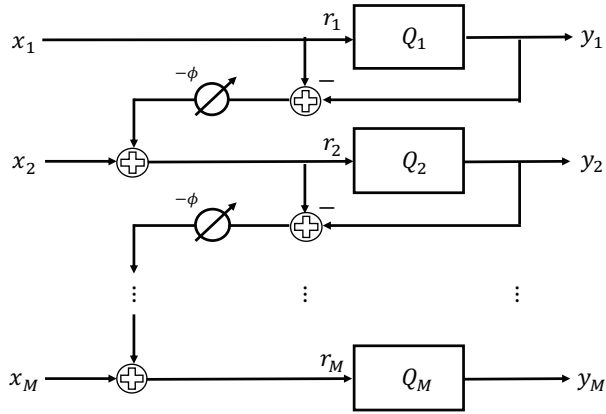}
\caption{Spatial $\Sigma\Delta$ architecture. 
}
\label{pic3}
\end{figure}
Fig. \ref{pic3} shows the architecture of an angle-steered $\Sigma\Delta$ array. 
Using Fig.~\ref{pic3} and equation~(\ref{s-d-relation}) at the top of the next page, we can formulate a compact input-output description of the spatial $\Sigma\Delta$ array by defining 
\begin{figure*}[!t]
\begin{equation}\label{s-d-relation}
y_m=
\begin{dcases*}
\begin{aligned}
& \mathcal{Q}_1\left(x_1\right)
& m=1 \\
& \mathcal{Q}_m\left(x_{m}+ e^{-j\phi}\left(x_{m-1} + e^{-j\phi}\left(\cdots\left(x_2+e^{-j\phi}\left(x_1-y_1\right)-y_2\right)\cdots\right)-y_{m-1}\right)\right) &   m>1\\
\end{aligned}
\end{dcases*}
\end{equation}
\hrulefill
    \vspace*{4pt}
    \end{figure*}
\begin{equation}\label{U-matrix}
\boldsymbol{U}=
  \left[ {\begin{array}{ccccc}
   1 & & & & \\
   e^{-j\phi} & 1 & & & \\
   \vdots & \ddots & \ddots & & \\
   e^{-j\left(M-1\right)\phi} & \cdots & e^{-j\phi} & 1 & \\
  \end{array} } \right]
\end{equation}
\begin{equation}\label{V-matrix}
    \boldsymbol{V}=\boldsymbol{U}-\boldsymbol{I}_M,
\end{equation}
and expressing the input to the quantizers as
\begin{equation}\label{r_def}
    \boldsymbol{r}=\boldsymbol{U}\boldsymbol{x}-\boldsymbol{V}\boldsymbol{y}.
\end{equation}
The output of the angle-steered one-bit $\Sigma\Delta$ array is then defined by
\begin{equation}\label{s-d-vector}
    \boldsymbol{y}=\mathcal{Q}\left(\boldsymbol{r}\right).
\end{equation}

\section{Characterizing the Spatial $\Sigma\Delta$ Architecture}
\subsection{Linear Model}\label{sec:linear-model}
To analyze the performance of spatial $\Sigma\Delta$ processing, analogous to temporal $\Sigma\Delta$, we will represent the one-bit quantization operation in~(\ref{s-d-vector}) with an equivalent linear model as follows:
\begin{equation} \label{bussgang-IO}
    \boldsymbol{y}=\mathcal{\mathcal{Q}}\left(\boldsymbol{r}\right)=\boldsymbol{\mathit{\Gamma}}\boldsymbol{r}+\boldsymbol{q},
\end{equation}
where $\boldsymbol{\mathit{\Gamma}}$ is an $M\times M$ matrix and $\boldsymbol{q}$ denotes the effective quantization noise. The value of $\boldsymbol{\mathit{\Gamma}}$ that makes the equivalent quantization noise, $\boldsymbol{q}$, uncorrelated with $\boldsymbol{r}$ is $\boldsymbol{\mathit{\Gamma}}_0=\boldsymbol{R}_{\boldsymbol{r}\boldsymbol{y}}^{H}\boldsymbol{R}_{\boldsymbol{r}}^{-1}$. For the case where the elements of $\boldsymbol{r}$ are all jointly Gaussian, the computation of $\boldsymbol{R}_{\boldsymbol{r}\boldsymbol{y}}$ is possible by resorting to the Bussgang theorem\footnote{The result can also be extended to cases where $\boldsymbol{r}$ belongs to a limited class of distributions, see \cite{Barret} for details} \cite{Bussgang}. This was the approach used in \cite{Li,LiTSML17} for a massive MIMO implementation with standard one-bit quantization, and the resulting $\boldsymbol{\mathit{\Gamma}}_0$ was a diagonal matrix. 

For the case of the $\Sigma\Delta$ architecture, even if the matrix $\boldsymbol{\mathit{\Gamma}}_0$ could be computed, this decomposition would not be of interest, for at least two reasons. First, the equivalent quantization noise $\boldsymbol{q}$ that results from setting $\boldsymbol{\mathit{\Gamma}}=\boldsymbol{\mathit{\Gamma}}_0$ in~\eqref{bussgang-IO} bears no connection to the quantization error fed from one antenna to the next as shown in Fig.~\ref{pic3}. Setting $\boldsymbol{\mathit{\Gamma}}=\boldsymbol{\mathit{\Gamma}}_0$ would produce a model in which $r_m$ and $q_{m-1}$ are uncorrelated, but it is clear from Fig.~\ref{pic3} that $r_m$ for the $\Sigma\Delta$ array directly depends on the quantization error from the $(m-1)$-th stage.
Second, $\boldsymbol{\mathit{\Gamma}}_0$ cannot be a diagonal matrix\footnote{
If $\boldsymbol{\mathit{\Gamma}}_0$ were diagonal, it could be made equal to the identity matrix by a proper scaling of each $y_m$. However, $\boldsymbol{\mathit{\Gamma}}_0$ can never be the identity matrix because this implies that $r_m=x_m-e^{-j\phi}q_{m-1}$, while simultaneously $r_m$ is uncorrelated with $q_{m-1}$, which is impossible.}, unlike the standard one-bit quantization case considered in \cite{Li}. The presence of off-diagonal elements in $\boldsymbol{\mathit{\Gamma}}_0$ implies that the model in (\ref{bussgang-IO}) represents the output of each quantizer as a linear combination of the inputs to that quantizer as well as other quantizers in the array. Such a model does not have an apparent connection with the scheme in Fig.~\ref{pic3}, where each quantizer 
produces its output depending only on its input alone. These inconsistencies between the mathematical model based on $\boldsymbol{\mathit{\Gamma}}=\boldsymbol{\mathit{\Gamma}}_0$ and the physical block diagram of the $\Sigma\Delta$ array in Fig.~\ref{pic3} are the result of attempting to force $\boldsymbol{r}$ and $\boldsymbol{q}$ to be uncorrelated, when the architecture is actually propagating the quantization error from one stage to the next.
\color{black}

Consequently, in order to derive an appropriate model for the analysis of the $\Sigma\Delta$ architecture, we propose to apply the Bussgang decomposition to each quantizer individually. In particular, we formulate the model in~(\ref{bussgang-IO}) using a matrix $\boldsymbol{\mathit{\Gamma}}=\mathrm{diag}\left(\gamma_1,\dots,\gamma_M\right)$ that is forced to be diagonal. This is equivalent to imposing a model in which $\boldsymbol{r}$ and $\boldsymbol{q}$ are uncorrelated component-wise: $\mathbb{E}\left[r_m q_m^*\right]=0$, which is the same criterion used to generate the model for the scalar case in Section~\ref{sec: temporal-sigma-delta}. The elements of $\boldsymbol{\mathit{\Gamma}}$ are given by
\begin{equation}\label{gamma_m2}
    \mathit{\gamma}_{m}=
    \frac{\mathbb{E}\left[r_my_m^*\right]}{\mathbb{E}\left[\left|r_m\right|^2\right]}=\alpha_m\frac{\mathbb{E}\left[\left|\mathfrak{Re}\left[r_m\right]\right|+\left|\mathfrak{Im}\left[r_m\right]\right|\right]}{\mathbb{E}\left[\left|r_m\right|^2\right]},
\end{equation}
where in the last equality and from now on, we assume that $r_m$ is circularly symmetric. This assumption implies that the quantizer output levels are identical for the real and imaginary parts, and thus we use $\alpha_m$ to represent both $\alpha_{m,r}$ and $\alpha_{m,i}$. 

As we will see later on, since the elements of $\boldsymbol{\mathit{\Gamma}}$ depend only on the signals at one stage of the $\Sigma\Delta$ architecture, they are much easier to compute than the elements of $\boldsymbol{\mathit{\Gamma}}_0$. Moreover, the resulting decomposition is consistent with Fig.~\ref{pic3}. Given that no precondition is imposed on the correlation $\mathbb{E}\left[r_m q_l^*\right]$ for $m \neq l$, the model is compatible with the fact that the quantization noise of one stage appears in subsequent stages.
\color{black}

Plugging (\ref{bussgang-IO}) into (\ref{r_def})  and using some algebraic manipulations, we {obtain the following mathematical model for the $\Sigma\Delta$ architecture:}
\begin{equation}\label{s-d-general-form}
    \boldsymbol{y}=\big(\boldsymbol{I}+\boldsymbol{\mathit{\Gamma}}\boldsymbol{V}\big)^{-1}\boldsymbol{\mathit{\Gamma}}\boldsymbol{U}\boldsymbol{x}+\big(\boldsymbol{I}+\boldsymbol{\mathit{\Gamma}}\boldsymbol{V}\big)^{-1}\boldsymbol{q}.
\end{equation}
Equation~(\ref{s-d-general-form}) is the spatial $\Sigma\Delta$ equivalent to the temporal domain $\Sigma\Delta$ description in~(\ref{transfer-function}). Similar to the temporal case, (\ref{s-d-general-form}) indicates that $\boldsymbol{\mathit{\Gamma}}=\boldsymbol{I}$ should hold for the spatial $\Sigma\Delta$ array to work as desired, that is, to pass $\boldsymbol{x}$ and $\boldsymbol{q}$ through spatial all-pass and high-pass filters, respectively. If $\boldsymbol{\mathit{\Gamma}}=\boldsymbol{I}$, then (\ref{s-d-general-form}) becomes
\begin{equation}\label{shaped-equation}
    \boldsymbol{y} = \boldsymbol{x} + \boldsymbol{U}^{-1}\boldsymbol{q} \; ,
\end{equation}
and the $m$-th element of $\boldsymbol{y}$ is expressed as 
\begin{equation}
    y_m=x_m+\left(q_m-e^{-j\phi}q_{m-1}\right) \; ,
\end{equation}
which explicitly shows the quantization noise-shaping characteristic of the spatial $\Sigma\Delta$ architecture.
\color{black}
The only task remaining to complete our proposed linear model is to calculate the power of the equivalent quantization noise.
The condition $\boldsymbol{\mathit{\Gamma}}=\boldsymbol{I}$ for the adequate operation of the $\Sigma\Delta$ scheme determines the quantization levels that have to be set. Setting (\ref{gamma_m2}) equal to $1$, we obtain the optimum value of $\alpha_m$:
\begin{equation}\label{alpha_opt}
    \alpha_{m}^{\star}=\frac{\mathbb{E}\left[\left|r_m\right|^2\right]}{\mathbb{E}\left[\left|\mathfrak{Re}\left[r_m\right]\right|+\left|\mathfrak{Im}\left[r_m\right]\right|\right]}=\frac{\mathbb{E}\left[\left|\mathfrak{Re}\left[r_m\right]\right|^2\right]}{\mathbb{E}\left[\left|\mathfrak{Re}\left[r_m\right]\right|\right]}.
\end{equation}

It is worth noting that (\ref{alpha_opt}) is different from 
\begin{equation}\label{alpha_llm}
    \alpha_{m}=\mathbb{E}\left[\left|\mathfrak{Re}\left[r_m\right]\right|\right],
\end{equation}
which leads to the Lloyd-Max one-bit quantizer that minimizes the mean-squared-error (MSE) between the input and the output of the quantizer. However, the Lloyd-Max approach makes the quantization error uncorrelated with the quantizer output, but not with the input.

While the expression derived in~(\ref{alpha_opt}) is useful, it is difficult to analytically evaluate the expectations in closed form, and it is not clear how the output level could be tuned using analog processing in the RF chain (e.g., via an AGC or some other type of calibration). To address this issue, we use the assumption that $r_m$ is Gaussian inherent in the Bussgang decomposition to find an approximation for $\alpha_m^\star$ that is easier to deal with, both for the subsequent mathematical analysis and from the viewpoint of a hardware implementation. The validity of the approximation will be apparent in the numerical examples presented later. If $r_m$ is Gaussian, we can write
\begin{equation}\label{alphaapprox}
    \alpha_{m}^{\star} = \frac{\sqrt{\pi \mathbb{E}\left[\left|r_m\right|^2\right]}}{2}.
\end{equation}
In the discussion below, we show how to express~(\ref{alphaapprox}) in terms of the statistics of the array output $\boldsymbol{x}$, which provides an analytical solution and clarifies how the quantizer output levels could be set in a practical setting.
\color{black}

\subsection{Quantization Noise Power}\label{quantization-noise-power}
In this section, we calculate the power of the effective quantization noise and the power of the quantizers' inputs, which is needed to properly set the output levels using (\ref{alphaapprox}). With $\boldsymbol{\mathit{\Gamma}}=\boldsymbol{I}$, (\ref{bussgang-IO}) becomes
\begin{equation}\label{io-eq}
    \boldsymbol{y}=\boldsymbol{r}+\boldsymbol{q}.
\end{equation}
Since $r_m$ and $q_m$ are uncorrelated, and using (\ref{alphaapprox}), we obtain 
\begin{equation}\label{q-var}
 \mathbb{E}\left[|q_m|^2\right]=
     \mathbb{E}\left[|y_m|^2\right] - \mathbb{E}\left[|r_m|^2\right]=\left(\frac{\pi}{2}-1\right)\mathbb{E}\left[|r_m|^2\right].
\end{equation} 
 To determine $\mathbb{E}\left[|r_m|^2\right]$, we substitute (\ref{io-eq}) 
 into (\ref{r_def}), so that 
\begin{equation}\label{q-recursion}
    \boldsymbol{r}=\boldsymbol{x} - \boldsymbol{U}^{-1}\boldsymbol{V}\boldsymbol{q}.
\end{equation}
It can be shown that
\begin{equation}\label{Zi1}
    \boldsymbol{U}^{-1}\boldsymbol{V}=e^{-j\phi}\boldsymbol{Z}_{-1} \; ,
\end{equation}
where\footnote{Note that $\boldsymbol{Z}_{-1}$ is the spatial domain equivalent of the delay operator $z^{-1}$ for the z-transform in the time domain.}
\begin{equation}\label{Zi2}
    \boldsymbol{Z}_{-1}=\left[ {\begin{array}{ccccc}
   0 & & & & \\
   1 & 0 & & & \\
   \vdots & \ddots & \ddots & & \\
   0 & \ddots & 1 & 0 & \\
  \end{array} } \right].
\end{equation}
Moreover, following the same reasoning as in Appendix A of \cite{Li}, it can be shown that $\mathbb{E}\left[x_{m'}q_m^*\right]\approx 0,~\forall m,m'\in\mathcal{M}=\left\{1,\cdots,M\right\}$. This results in $\boldsymbol{R}_{\boldsymbol{qx}}\approx\boldsymbol{0}$. 
 \color{black}
 Therefore,
\begin{equation}\label{recursion1}
    \boldsymbol{R}_{\boldsymbol{r}}=\boldsymbol{R}_{\boldsymbol{x}} + \boldsymbol{Z}_{-1}\boldsymbol{R}_{\boldsymbol{q}}\boldsymbol{Z}^{H}_{-1} \;.
\end{equation}

Eq.~(\ref{recursion1}) implies that
\begin{equation}\label{recursion2}
\mathbb{E}\left[|r_m|^2\right] =
\begin{dcases*}
\begin{aligned}
& \mathbb{E}\left[|x_m|^2\right]
& m=1 \\
& \mathbb{E}\left[|x_m|^2\right] + \mathbb{E}\left[|q_{m-1}|^2\right] &   m>1\\
\end{aligned}
\end{dcases*}
\end{equation}
Substituting (\ref{q-var}) into (\ref{recursion2}) and noting that $\mathbb{E}\left[|r_1|^2\right]=\mathbb{E}\left[|x_1|^2\right]$, we obtain the following recursive equality to calculate $\mathbb{E}\left[|r_m|^2\right]$ for $m>1$:
\begin{equation}\label{recursive-eq}
    \mathbb{E}\left[|r_m|^2\right]=\mathbb{E}\left[|x_m|^2\right]+\left(\frac{\pi}{2} -1\right)\mathbb{E}\left[|r_{m-1}|^2\right].
\end{equation}
Let 
\begin{equation}
\boldsymbol{p}_{\boldsymbol{\chi}}=
\begin{bmatrix}
        \mathbb{E}\left[|\chi_1|^2\right], \mathbb{E}\left[|\chi_2|^2\right],\cdots,\mathbb{E}\left[|\chi_M|^2\right]
\end{bmatrix}^T,
\end{equation}
where $\boldsymbol{\chi}$ can be any element of the set $\boldsymbol{\chi}\in\left\{\boldsymbol{r},\boldsymbol{x},\boldsymbol{q}\right\}$.
Then, using (\ref{q-var}) and (\ref{recursive-eq}), we have
\begin{equation}\label{rxpower}
    \boldsymbol{p}_{\boldsymbol{r}}=\boldsymbol{\Pi}\boldsymbol{p}_{\boldsymbol{x}}
\end{equation}
\begin{equation}\label{qnoise-power}
    \boldsymbol{p}_{\boldsymbol{q}}=\left(\frac{\pi}{2}-1\right)\boldsymbol{\Pi}\boldsymbol{p}_{\boldsymbol{x}} \; ,
\end{equation}
where 
\begin{equation}
\boldsymbol{\Pi}=    
    \left[
    \begin{array}{cccccc}
    \begin{matrix} 
    1 & & & & & \boldsymbol{0} \cr
    \left(\frac{\pi}{2}-1\right) & 1 & & & &  \cr
    \vdots & \ddots & 1 & & & \cr \left(\frac{\pi}{2}-1\right)^{m} & \ddots& \ddots & \ddots & & \cr \vdots& \ddots & \ddots& \ddots & \ddots & \cr \left(\frac{\pi}{2}-1\right)^{M-1} &\cdots & \left(\frac{\pi}{2}-1\right)^{m}& \cdots&\left(\frac{\pi}{2}-1\right) & 1
    \end{matrix}
    \end{array} \right]. 
\end{equation}

Equation~(\ref{rxpower}) shows that the calculation of $\mathbb{E}\left[\left|r_m\right|^2\right]$ needed in~(\ref{alphaapprox}) can be formulated in terms of the power of the antenna outputs $\mathbb{E}\left[\left|x_m\right|^2\right]$, for which simple expressions exist from~(\ref{system-model}). This further implies that control of $\mathbb{E}\left[\left|x_m\right|^2\right]$ via an AGC would allow the quantizer output levels to be set without feedback from the digital baseband. In the following remark, we show that, using the optimal quantizer output settings, the power of the quantization noise does not grow with $m$ despite the fact that it is propagated from one antenna to the next. 

\begin{rem}
Eq. (\ref{qnoise-power}) implies that, by appropriately selecting the quantizers' output levels, the quantization noise power does not increase without bound. In particular, consider the case where the power of the received signal is constant over the array elements, i.e., $\boldsymbol{p}_{\boldsymbol{x}}=p_x\boldsymbol{1}$. Then,
\begin{equation}
    \mathbb{E}\left[|q_m|^2\right]=\left(\frac{\pi}{2}-1\right)\frac{1-\left(\frac{\pi}{2}-1\right)^m}{1-\left(\frac{\pi}{2}-1\right)}p_x\;{\xrightarrow[m\rightarrow\infty]{}}\;\frac{\frac{\pi}{2}-1}{2-\frac{\pi}{2}}p_x \; ,
\end{equation}
which shows that, in the limit of a large number of antenna elements, the quantization noise power converges to a constant value of approximately 1.33 times the input power.
\end{rem}

\subsection{Quantization Noise Power Density}

In the time domain, it is well-known that sampling a band-limited signal by a rate $N$ times larger than the Nyquist rate and down-sampling after quantization can reduce the in-band quantization noise power by a factor of $1/N$ and $1/N^3$ for standard one-bit and $\Sigma\Delta$ modulation, respectively \cite{Oppenheim}. 
In this subsection, we look for a similar behaviour for quantization across an array in space. More precisely, we want to quantify how spatial oversampling, i.e., decreasing the antenna spacing, $d/\lambda$, (or equivalently, increasing the number of antennas for space-constrained arrays) can reduce the quantization noise power for the in-band angular spectrum. To do so, we define the quantization noise power density as 
\begin{equation}\label{noise-density}
   \rho_q\left(u\right)\triangleq\frac{1}{ M}{\boldsymbol{a}\left(u\right)^H\boldsymbol{R}~\boldsymbol{a}\left(u\right)}, 
\end{equation}
where $\boldsymbol{R}$ is the covariance matrix of the quantization noise. To differentiate the two cases, we denote the covariance matrix of the quantization noise for standard one-bit quantization as $\boldsymbol{R}_{\boldsymbol{q}_1}$, and the covariance of the $\Sigma\Delta$ quantization noise as $\boldsymbol{R}_{\boldsymbol{q}_{\Sigma\Delta}}$. Expressions for these covariance matrices will be derived later in this subsection. Hence, the normalized received quantization noise power over some angular region, $\Theta$, is given by\footnote{To simplify the calculation of the quantization noise power, we assume without loss of generality that the $\Sigma\Delta$ array is steered to broadside ($\theta=0$).}
\begin{equation}\label{noise-power}
     \mathcal{P}_q=\frac{1}{2\delta}\int_{-\delta}^{\delta}{\rho_q\left(u\right)du},
\end{equation}
where $\delta=\mathrm{sin}\left(\frac{\Theta}{2}\right)$. Next we find $\mathcal{P}_q$ for standard one-bit and $\Sigma\Delta$ quantization.

\subsubsection{One-bit Quantization}
Unlike \cite{Li}, for standard one-bit quantization, we choose the quantizer output levels as $\alpha_m={\sqrt{\pi\mathbb{E}\left[\left|x_m\right|^2\right]}}/{2}$ so that $y_m=\mathcal{Q}\left(x_m\right)=x_m+q_m$. This causes no loss of generality for standard one-bit quantization, since the value of the quantizer output has no impact on the performance of the resulting system. Therefore, the covariance matrix of the quantization noise can be written as
\begin{equation}
    \boldsymbol{R}_{\boldsymbol{q}_1}=\boldsymbol{R}_{\boldsymbol{y}} - \boldsymbol{R}_{\boldsymbol{x}} \; , 
\end{equation}
where the arc-sine law \cite{Vleck66,Jacovitti94} is used to obtain 
\begin{equation}\label{y1-cov}
 \boldsymbol{R}_{\boldsymbol{y}}=  \mathrm{diag}\left(\boldsymbol{R}_{\boldsymbol{x}}\right)^{\frac{1}{2}}\mathrm{sin}^{-1}\left(\boldsymbol{\Upsilon}\right)~\mathrm{diag}\left(\boldsymbol{R}_{\boldsymbol{x}}\right)^{\frac{1}{2}}, 
\end{equation}
and
\begin{equation}
    \boldsymbol{\Upsilon}=\mathrm{diag}\left(\boldsymbol{R}_{\boldsymbol{x}}\right)^{-\frac{1}{2}}\mathfrak{Re}\left(\boldsymbol{R}_{\boldsymbol{x}}\right)\mathrm{diag}\left(\boldsymbol{R}_{\boldsymbol{x}}\right)^{-\frac{1}{2}} +\\ j\mathrm{diag}\left(\boldsymbol{R}_{\boldsymbol{x}}\right)^{-\frac{1}{2}}\mathfrak{Im}\left(\boldsymbol{R}_{\boldsymbol{x}}\right)\mathrm{diag}\left(\boldsymbol{R}_{\boldsymbol{x}}\right)^{-\frac{1}{2}}.
\end{equation}
Note that the arc-sine in~(\ref{y1-cov}) is applied separately to each element of the matrix argument, and also separately to the real and imaginary parts of the matrix elements.

From \cite{Li}, we have that $\mathrm{diag}\left(\boldsymbol{R}_{\boldsymbol{y}}\right)=\frac{\pi}{2}\mathrm{diag}\left(\boldsymbol{R}_{\boldsymbol{x}}\right)$. Since the off-diagonal elements of $\boldsymbol{\Upsilon}$ are small, we use the approximation $\mathrm{sin}^{-1}\left(x\right)\approx \zeta x$, where $\zeta>1$, to obtain 
\begin{equation}\label{q1-cov-approx}
    \boldsymbol{R}_{\boldsymbol{q}_1}\approx\left(\zeta-1\right)\boldsymbol{R}_{\boldsymbol{x}} +\left(\frac{\pi}{2}-\zeta\right)\mathrm{diag}\left( \boldsymbol{R}_{\boldsymbol{x}}\right). 
\end{equation}
Moreover, from (\ref{system-model}), $\boldsymbol{R}_{\boldsymbol{x}}$ becomes
\begin{equation}
  \boldsymbol{R}_{\boldsymbol{x}}= \sum_{k=1}^{K}{p_k\beta_k\frac{1}{L}\sum_{\ell=1}^{L}{\boldsymbol{a}\left(u_{k\ell}\right)\boldsymbol{a}\left(u_{k\ell}\right)^H}+\sigma_n^2\boldsymbol{I}}, 
\end{equation}
where for $L\gg 1$, $u_{k\ell}$ can be taken as a random variable uniformly distributed in $\left[-\delta,\delta\right]$. That is,
\begin{equation} \label{sum_exp_approx}
    \frac{1}{L}\sum_{\ell=1}^{L}{\boldsymbol{a}\left(u_{k\ell}\right)\boldsymbol{a}\left(u_{k\ell}\right)^H}\approx\mathbb{E}\left[\boldsymbol{a}\left(u\right)\boldsymbol{a}\left(u\right)^H\right]=\\
    \frac{1}{2\delta}\int_{-\delta}^{\delta}{\boldsymbol{a}\left(u\right)\boldsymbol{a}\left(u\right)^H du}.
\end{equation}
Therefore,
\begin{equation}\label{Rx-int}
    \boldsymbol{R}_{\boldsymbol{x}}=\sum_{k=1}^{K}{p_k\beta_k\frac{1}{2\delta}\int_{-\delta}^{\delta}{\boldsymbol{a}\left(u\right)\boldsymbol{a}\left(u\right)^H du}}+\sigma_n^2\boldsymbol{I}.
\end{equation}

Now we are ready to calculate the standard one-bit quantization noise power, $\mathcal{P}_{q_1}$.
\begin{pro}\label{rho-q1-pro}
The normalized quantization noise power for standard one-bit quantization is 
\begin{equation}\label{rho_q1}
    \mathcal{P}_{q_1}=\left(\zeta-1\right)\times\\
    \left[\sigma_n^2+\frac{1}{M}\sum_{k=1}^{K}{p_k\beta_k\sum_{n=0}^{M-1}{\sum_{m=0}^{M-1}{\mathrm{sinc}^2\left(2\pi\frac{d}{\lambda}\left({m}-{n}\right)\delta\right)}}}\right]+\\
    \frac{\frac{\pi}{2}-\zeta}{M}\mathrm{Tr}\left[\boldsymbol{R}_{\boldsymbol{x}}\right],
\end{equation}
where $\mathrm{sinc}\left(x\right)\triangleq\frac{\mathrm{sin}\left(x\right)}{x}$.
\end{pro}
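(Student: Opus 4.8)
The plan is to substitute the approximate one-bit quantization noise covariance from~(\ref{q1-cov-approx}) into the noise power density~(\ref{noise-density}) and then into the angular average~(\ref{noise-power}), reducing the whole quantity to two integrated quadratic forms, $\frac{1}{M}\boldsymbol{a}(u)^H\boldsymbol{R}_{\boldsymbol{x}}\boldsymbol{a}(u)$ and $\frac{1}{M}\boldsymbol{a}(u)^H\mathrm{diag}(\boldsymbol{R}_{\boldsymbol{x}})\boldsymbol{a}(u)$, averaged over $u\in[-\delta,\delta]$. Since $\boldsymbol{R}_{\boldsymbol{q}_1}$ is a linear combination of $\boldsymbol{R}_{\boldsymbol{x}}$ and $\mathrm{diag}(\boldsymbol{R}_{\boldsymbol{x}})$, the noise power splits additively into a $(\zeta-1)$-weighted term and a $(\frac{\pi}{2}-\zeta)$-weighted term, which already matches the two-term structure of~(\ref{rho_q1}).

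The $(\frac{\pi}{2}-\zeta)$ term is immediate. Because every entry of the ULA steering vector in~(\ref{steering vector}) has unit modulus, $\boldsymbol{a}(u)^H\mathrm{diag}(\boldsymbol{R}_{\boldsymbol{x}})\boldsymbol{a}(u)=\sum_{m}[\boldsymbol{R}_{\boldsymbol{x}}]_{mm}=\mathrm{Tr}(\boldsymbol{R}_{\boldsymbol{x}})$, independent of $u$. The angular average therefore passes through untouched and I recover the final term $\frac{\frac{\pi}{2}-\zeta}{M}\mathrm{Tr}(\boldsymbol{R}_{\boldsymbol{x}})$ with no further work.

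For the $(\zeta-1)$ term I insert the integral representation of $\boldsymbol{R}_{\boldsymbol{x}}$ from~(\ref{Rx-int}). The identity (noise) piece contributes $\sigma_n^2\,\frac{1}{M}\boldsymbol{a}(u)^H\boldsymbol{a}(u)=\sigma_n^2$, again $u$-independent, producing the $\sigma_n^2$ term. The signal piece yields, for each user $k$, the double integral $\frac{1}{(2\delta)^2}\int_{-\delta}^{\delta}\int_{-\delta}^{\delta}\frac{1}{M}|\boldsymbol{a}(u)^H\boldsymbol{a}(v)|^2\,dv\,du$. This is the heart of the computation. Expanding $\boldsymbol{a}(u)^H\boldsymbol{a}(v)=\sum_{m=0}^{M-1}e^{j m\,2\pi\frac{d}{\lambda}(u-v)}$ gives $|\boldsymbol{a}(u)^H\boldsymbol{a}(v)|^2=\sum_{m,n}e^{j(m-n)2\pi\frac{d}{\lambda}(u-v)}$, so for each $(m,n)$ the $u$- and $v$-dependence factorizes into a product of two conjugate single integrals. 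Each evaluates to $\frac{2\sin((m-n)2\pi\frac{d}{\lambda}\delta)}{(m-n)2\pi\frac{d}{\lambda}}$, and after dividing by $(2\delta)^2$ their product collapses to $\mathrm{sinc}^2(2\pi\frac{d}{\lambda}(m-n)\delta)$. Summing over $m,n$ and reinstating the $\frac{1}{M}$ and the user sum produces the bracketed term of~(\ref{rho_q1}).

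The step needing the most care is the evaluation of that double angular integral: one must see that the integrand factorizes in $u$ and $v$ so the two integrations decouple, and that dividing by $(2\delta)^2$ is precisely what turns the product $(2\sin(a\delta)/a)^2$ into $\mathrm{sinc}^2(a\delta)$ with $a=(m-n)2\pi\frac{d}{\lambda}$. Everything else follows from the unit-modulus property of the steering vector and the additive structure supplied by~(\ref{q1-cov-approx}).
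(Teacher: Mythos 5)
Your proposal is correct and follows essentially the same route as the paper's proof: substitute~(\ref{q1-cov-approx}) together with the integral form~(\ref{Rx-int}) into~(\ref{noise-power}), use the unit-modulus steering-vector entries to reduce the $\mathrm{diag}\left(\boldsymbol{R}_{\boldsymbol{x}}\right)$ contribution to $\frac{\frac{\pi}{2}-\zeta}{M}\mathrm{Tr}\left[\boldsymbol{R}_{\boldsymbol{x}}\right]$, and reduce the per-user term to the normalized double integral $\frac{1}{4\delta^2}\iint_{-\delta}^{\delta}\left|\boldsymbol{a}\left(v\right)^H\boldsymbol{a}\left(u\right)\right|^2 du\, dv$. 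The only difference is that where the paper invokes Eq.~(10) of \cite{Masouros} for the $\mathrm{sinc}^2$ identity, you derive it directly---and your factorization of the integrand into conjugate single integrals, each equal to $\frac{2\sin\left(\left(m-n\right)2\pi\frac{d}{\lambda}\delta\right)}{\left(m-n\right)2\pi\frac{d}{\lambda}}$, with the $\left(2\delta\right)^{2}$ normalization yielding $\mathrm{sinc}^2\left(2\pi\frac{d}{\lambda}\left(m-n\right)\delta\right)$, is a correct self-contained verification of that cited result.
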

\begin{proof}
Plugging (\ref{Rx-int}) into (\ref{q1-cov-approx}) results in  
\begin{equation}
    \mathcal{P}_{q_1}=\left(\zeta-1\right)\times\\
    \left[\sigma_n^2 + \frac{1}{4\delta^2 M}\sum_{k=1}^{K}{p_k\beta_k\iint\limits_{-\delta}^{\delta}{\left|\boldsymbol{a}\left(v\right)^H\boldsymbol{a}\left(u\right)\right|^2du dv}}\right]+\\
    \frac{\frac{\pi}{2}-\zeta}{M}\mathrm{Tr}\left[\boldsymbol{R}_{\boldsymbol{x}}\right].
\end{equation}
Using Eq. (10) in \cite{Masouros} yields
\begin{equation}
 \frac{1}{4\delta^2}\iint\limits_{-\delta}^{\delta}{\left|\boldsymbol{a}\left(v\right)^H\boldsymbol{a}\left(u\right)\right|^2du dv}=\mathbb{E}\left[\left|\boldsymbol{a}\left(v\right)^H\boldsymbol{a}\left(u\right)\right|^2\right]=\\
 \sum_{n=0}^{M-1}{\sum_{m=0}^{M-1}{\mathrm{sinc}^2\left(2\pi\frac{d}{\lambda}\left({m}-{n}\right)\delta\right)}},
\end{equation}
which completes the proof.
\end{proof}

\begin{rem}\label{remark2}
Consider the case that $M\gg 1$. Then, from (\ref{rho_q1})
\begin{equation}\label{asym-dl}
    \mathcal{P}_{q_1}\;\stackrel{\text{(a)}}{\approx}
    \left(\zeta-1\right)\sigma_n^2+\\
    \left(\zeta-1\right)
    \left[\frac{1}{2\delta}\left(\frac{d}{\lambda}\right)^{-1}-\frac{1}{4\pi^2\delta^2}\left(\frac{d}{\lambda}\right)^{-2}f\left(\frac{d}{\lambda}\right)\right]\sum_{k=1}^{K}{p_k\beta_k}\\
    +{\left(\frac{\pi}{2}-\zeta\right)\sum_{k=1}^{K}{p_k\beta_k}},
\end{equation}
where $f\left(x\right)\triangleq \frac{2}{M}\sum_{n=1}^{M-1}{\frac{\mathrm{sin}^2\left(2\pi x\delta n\right)}{n}}$ and in ($a$) we have used Eq. (14) of \cite{Masouros}. Equation~(\ref{asym-dl}) states that, for standard one-bit quantization, increasing the spatial oversampling in a large antenna array ($d/\lambda\rightarrow 0$) increases the quantization noise power proportional to $\left(d/\lambda\right)^{-1}$.
\end{rem}

\begin{rem}
Consider the fixed-aperture case where $d_0=M\frac{d}{\lambda}$ is a constant ({\em i.e.}, the antenna spacing decreases proportionally to the increase in the number of antennas). Then, from (\ref{rho_q1})
\begin{equation}\label{asym-d0}
    \mathcal{P}_{q_1}\;{\xrightarrow[M\rightarrow \infty]{}}\;
    \left(\zeta-1\right)\left[\sigma_n^2+M\sum_{k=1}^{K}{p_k\beta_k}\right]+{\left(\frac{\pi}{2}-\zeta\right)\sum_{k=1}^{K}{p_k\beta_k}}.
\end{equation}
Equation~(\ref{asym-d0}) states that, for standard one-bit quantization, increasing the number of antennas for an array with a fixed aperture, $d_0$, increases the quantization noise power linearly with $M$. 
\end{rem}

\subsubsection{$\Sigma\Delta$ Quantization}
From (\ref{shaped-equation}), the covariance of the quantization noise for the $\Sigma\Delta$ architecture is $\boldsymbol{R}_{\boldsymbol{q}_{\Sigma\Delta}}=\boldsymbol{U}^{-1}\boldsymbol{R}_{\boldsymbol{q}}\boldsymbol{U}^{-H}$. We derive an expression for the normalized quantization noise power of the $\Sigma\Delta$ array, $\mathcal{P}_{q_{\Sigma\Delta}}$, in the next proposition. 

\begin{pro}\label{rho-qsd-pro}
The quantization noise power for spatial $\Sigma\Delta$ quantization is
\begin{equation}\label{rho_qsd}
    \mathcal{P}_{q_{\Sigma\Delta}}=\frac{2}{M}\left(\mathrm{Tr}\left[\boldsymbol{R}_{\boldsymbol{q}}\right]-\sigma^2_{q_M}\right)
    \left[1 - \mathrm{sinc}\left(2\pi\frac{d}{\lambda}\delta\right)\right]+\frac{\sigma^2_{q_M}}{M},
\end{equation}
where $\sigma_{q_M}^2=\mathbb{E}\left[|q_M|^2\right]$.
\end{pro}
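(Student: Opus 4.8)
The plan is to insert $\boldsymbol{R}_{\boldsymbol{q}_{\Sigma\Delta}}=\boldsymbol{U}^{-1}\boldsymbol{R}_{\boldsymbol{q}}\boldsymbol{U}^{-H}$ into the noise-density definition~\eqref{noise-density} and then carry out the angular integration prescribed by~\eqref{noise-power}. The enabling first step is a closed form for $\boldsymbol{U}^{-1}$. From $\boldsymbol{V}=\boldsymbol{U}-\boldsymbol{I}$ in~\eqref{V-matrix} we have $\boldsymbol{U}^{-1}\boldsymbol{V}=\boldsymbol{I}-\boldsymbol{U}^{-1}$, which combined with~\eqref{Zi1} yields $\boldsymbol{U}^{-1}=\boldsymbol{I}-e^{-j\phi}\boldsymbol{Z}_{-1}$ and hence $\boldsymbol{U}^{-H}=\boldsymbol{I}-e^{j\phi}\boldsymbol{Z}_{-1}^{H}$, where $\boldsymbol{Z}_{-1}^{H}$ is the up-shift operator (since $\boldsymbol{Z}_{-1}$ in~\eqref{Zi2} is real, its conjugate transpose equals its transpose). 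Writing $\boldsymbol{b}(u)=\boldsymbol{U}^{-H}\boldsymbol{a}(u)$, the quadratic form in~\eqref{noise-density} reduces to $\boldsymbol{a}(u)^{H}\boldsymbol{R}_{\boldsymbol{q}_{\Sigma\Delta}}\boldsymbol{a}(u)=\boldsymbol{b}(u)^{H}\boldsymbol{R}_{\boldsymbol{q}}\boldsymbol{b}(u)$, so everything hinges on evaluating $\boldsymbol{b}(u)$ explicitly.

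I would then specialize to broadside steering, $\phi=0$ (as assumed in the footnote accompanying~\eqref{noise-power}), and exploit the joint structure of the up-shift and the geometric steering vector~\eqref{steering vector}. Since $(\boldsymbol{Z}_{-1}^{H}\boldsymbol{a})_{m}=a_{m+1}$ for $m<M$ and vanishes for $m=M$, and since consecutive entries obey $a_{m+1}=e^{-j\omega_{s}}a_{m}$ with $\omega_{s}=2\pi\frac{d}{\lambda}u$, each of the first $M-1$ entries of $\boldsymbol{b}(u)$ is the corresponding entry of $\boldsymbol{a}(u)$ multiplied by the common scalar $c=1-e^{-j\omega_{s}}$, while the last entry is left untouched: $b_{m}=c\,a_{m}$ for $m<M$ and $b_{M}=a_{M}$. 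This truncation at the terminal antenna — the up-shift zeroing the $M$-th component — is precisely what produces the separate $\sigma_{q_{M}}^{2}$ term in the statement.

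Next I would invoke the per-quantizer (scalar) Bussgang model, under which the quantization errors are mutually uncorrelated, so that $\boldsymbol{R}_{\boldsymbol{q}}=\mathrm{diag}(\sigma_{q_{1}}^{2},\dots,\sigma_{q_{M}}^{2})$ and $\boldsymbol{b}^{H}\boldsymbol{R}_{\boldsymbol{q}}\boldsymbol{b}=\sum_{m}\sigma_{q_{m}}^{2}|b_{m}|^{2}$. Because $|a_{m}|=1$, this collapses to $|c|^{2}\bigl(\mathrm{Tr}[\boldsymbol{R}_{\boldsymbol{q}}]-\sigma_{q_{M}}^{2}\bigr)+\sigma_{q_{M}}^{2}$ with $|c|^{2}=2\bigl(1-\cos\omega_{s}\bigr)$. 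Substituting into~\eqref{noise-power} and using the elementary identity $\tfrac{1}{2\delta}\int_{-\delta}^{\delta}\cos\!\bigl(2\pi\tfrac{d}{\lambda}u\bigr)\,du=\mathrm{sinc}\bigl(2\pi\tfrac{d}{\lambda}\delta\bigr)$ gives $\tfrac{1}{2\delta}\int_{-\delta}^{\delta}|c|^{2}\,du=2\bigl[1-\mathrm{sinc}(2\pi\tfrac{d}{\lambda}\delta)\bigr]$; dividing by $M$ then assembles exactly~\eqref{rho_qsd}.

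The only step that is not pure bookkeeping is recognizing that the entire quadratic form collapses once $\boldsymbol{b}(u)$ is in hand, and in particular handling the boundary row correctly: the final antenna escapes the $|c|^{2}$ scaling and contributes its full noise power $\sigma_{q_{M}}^{2}$, whereas all the interior antennas are uniformly shaped by the high-pass factor $|c|^{2}=2(1-\cos\omega_s)$. Keeping track of this edge term — together with the diagonality (whiteness) assumption on $\boldsymbol{R}_{\boldsymbol{q}}$, which is what lets the cross terms drop so that only $\mathrm{Tr}[\boldsymbol{R}_{\boldsymbol{q}}]$ survives — is where care is required; the angular integral itself is routine.
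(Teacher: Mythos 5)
Your proof is correct and takes essentially the same route as the paper's: both set $\phi=0$, use $\boldsymbol{U}^{-1}=\boldsymbol{I}-\boldsymbol{Z}_{-1}$, treat $\boldsymbol{R}_{\boldsymbol{q}}$ as diagonal so only $\mathrm{Tr}\left[\boldsymbol{R}_{\boldsymbol{q}}\right]-\sigma_{q_M}^2$ survives with the edge term $\sigma_{q_M}^2$, and reduce the problem to $\frac{1}{2\delta}\int_{-\delta}^{\delta}4\,\mathrm{sin}^2\left(\pi\frac{d}{\lambda}u\right)du$; your evaluation via $\boldsymbol{b}(u)=\boldsymbol{U}^{-H}\boldsymbol{a}(u)$ is just the adjoint-side bookkeeping of the paper's direct computation of $\boldsymbol{U}^{-1}\boldsymbol{q}$. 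One minor mislabeling: the mutual uncorrelatedness of the $q_m$ is not a consequence of the per-quantizer Bussgang condition $\mathbb{E}\left[r_m q_m^*\right]=0$ itself — the paper obtains $\mathbb{E}\left[q_m q_{m'}^*\right]\approx 0$ as an additional approximation argued from $\boldsymbol{R}_{\boldsymbol{qx}}\approx\boldsymbol{0}$ and~(\ref{q-recursion}) — but since you invoke exactly the diagonal $\boldsymbol{R}_{\boldsymbol{q}}$ the paper also uses, this does not affect the validity of your argument.
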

\begin{proof}
Substituting $\boldsymbol{R}_{\boldsymbol{q}_{\Sigma\Delta}}=\mathbb{E}\left[\boldsymbol{U}^{-1}\boldsymbol{q}\boldsymbol{q}^H\boldsymbol{U}^{-H}\right]$ into (\ref{noise-density}) leads to 
\begin{equation}
    \mathcal{P}_{q_{\Sigma\Delta}}=\frac{1}{M}
    \frac{1}{2\delta}\int_{-\delta}^{\delta}{\mathbb{E}\left[\left|\boldsymbol{a}\left(u\right)^H\boldsymbol{U}^{-1}\boldsymbol{q}\right|^2\right]du}
    .
\end{equation}
We set $\phi=0$ due to the assumption of $u\in\left[-\delta,\delta\right]$ in the definition of the quantization noise power, and we note that
\begin{equation}
    \boldsymbol{U}^{-1}=\boldsymbol{I}_M - \boldsymbol{Z}_{-1} \; .
\end{equation}
Then
\begin{equation}
    \boldsymbol{U}^{-1}\boldsymbol{q}=\left(\boldsymbol{I}_M - \boldsymbol{Z}_{-1}\right)\boldsymbol{q}=
    \begin{bmatrix}
           q_1 \\
           q_2-q_1 \\
           \vdots \\
           q_M-q_{M-1}
    \end{bmatrix} \;.
\end{equation}

In addition, from (\ref{q-recursion}), and the fact that $\boldsymbol{R}_{\boldsymbol{qx}}\approx\boldsymbol{0}$, it can be readily shown that $\mathbb{E}\left[q_{m}q_{m\pm 1}^*\right]\approx 0$. Hence, for the sake of analysis, we approximate $\mathbb{E}\left[q_{m}q_{m'}^*\right]\approx 0,~\forall m\ne m'\in\mathcal{M}$, and therefore $\boldsymbol{R}_{\boldsymbol{q}}= \mathrm{diag}\left(\boldsymbol{p}_{\boldsymbol{q}}\right)$. Consequently,
\begin{multline}\label{integrand}
   \mathbb{E}\left[\left|\boldsymbol{a}\left(u\right)^H\boldsymbol{U}^{-1}\boldsymbol{q}\right|^2\right]=\\
    {\left|1-e^{j2\pi\frac{d}{\lambda}u}\right|^2\sum_{m=1}^{M-1}{\mathbb{E}\left[\left|q_m\right|^2\right] }+\mathbb{E}\left[\left|q_M\right|^2\right] }=
    4\left(\mathrm{Tr}\left[\boldsymbol{R}_{\boldsymbol{q}}\right]-\sigma^2_{q_M}\right)\mathrm{sin}^2\left(\pi \frac{d}{\lambda}u\right) + \sigma_{q_M}^2.
\end{multline}
By integrating (\ref{integrand}) and using some algebraic manipulation, we arrive at (\ref{rho_qsd}).
\end{proof}

\begin{rem}\label{remark4}
Consider the case that $M\gg 1$. Then, from (\ref{rho_qsd})
\begin{equation}\label{asym-ds}
    \mathcal{P}_{q_{\Sigma\Delta}}\;\stackrel{\text{(a)}}{\approx}
     \frac{4}{3}\frac{\frac{\pi}{2}-1}{2-\frac{\pi}{2}}\pi^2\delta^2\left(\frac{d}{\lambda}\right)^2p_x \; ,
\end{equation}
where in ($a$) we have used $\mathrm{sinc}\left(x\right)\approx 1 - \frac{x^2}{6}$ and
\begin{equation}
\frac{1}{M}\left(\mathrm{Tr}\left[\boldsymbol{R}_{\boldsymbol{q}}\right]-\sigma^2_{q_M}\right)\approx \frac{\frac{\pi}{2}-1}{2-\frac{\pi}{2}}p_x   
\end{equation}
for $M\gg 1$ and assuming $\boldsymbol{p}_{\boldsymbol{x}}=p_x\boldsymbol{1}$. Equation (\ref{asym-ds}) states that, by increasing the spatial oversampling ($d/\lambda\rightarrow 0$), the quantization noise power for the $\Sigma\Delta$ array tends to zero proportional to $\left(d/\lambda\right)^2$. This result is in contrast to that for the standard one-bit quantization power, which was shown earlier to increase proportional to $\left(d/\lambda\right)^{-1}$. Hence, the spatial $\Sigma\Delta$ architecture brings about an oversampling gain of $\left(d/\lambda\right)^{3}$ compared to the standard one-bit architecture. While this is a promising result, as mentioned earlier the practical limitations of placing antenna elements close to each other prevent us from achieving a high degree of spatial oversampling.
\end{rem}

\begin{rem}
Consider the case that $d_0=M\frac{d}{\lambda}$ is a constant. Then, from (\ref{rho_qsd})
\begin{equation}\label{asym-d0sd}
    M^2\mathcal{P}_{q_{\Sigma\Delta}}\;{\xrightarrow[M\rightarrow \infty]{}}\;\frac{4}{3}\frac{\frac{\pi}{2}-1}{2-\frac{\pi}{2}}\pi^2\delta^2{d_0^2}p_x.
\end{equation}
Equation~(\ref{asym-d0sd}) states that, for spatial $\Sigma\Delta$ quantization, increasing the number of antennas for an array with a fixed aperture, $d_0$, decreases the quantization noise power proportional to ${1}/{M^2}$. Hence, the spatial $\Sigma\Delta$ architecture brings about an oversampling gain of $M^{3}$ compared to the standard one-bit architecture. 
\end{rem}

In the next section, we study the spectral efficiency of a massive MIMO system with spatial $\Sigma\Delta$ processing and discuss the impact of the spatial $\Sigma\Delta$ architecture on the system performance. 

\section{Spectral Efficiency}\label{sec:spectral efficiency}
In this section, we study the SE of a massive MIMO system with spatial $\Sigma\Delta$ processing. We consider maximum ratio combining (MRC) and zero-forcing (ZF) receivers. We derive here an approximation for the SE of the system with an MRC receiver, and evaluate the SE for the ZF receiver in the next section, numerically. We first present the case where perfect channel state information (CSI) is assumed to be available at the BS, and then we discuss the impact of imperfect CSI on the system performance at the end of the section. 

From (\ref{system-model}) and (\ref{shaped-equation}), the received signal at a BS with a $\Sigma\Delta$ architecture can be modeled as
\begin{equation}\label{shaped-received}
    \boldsymbol{y}=\boldsymbol{G}\boldsymbol{P}^{\frac{1}{2}}
  \boldsymbol{{s}} + {\boldsymbol{n}}+\boldsymbol{U}^{-1}\boldsymbol{q}.
\end{equation}
Denoting the linear receiver by $\boldsymbol{W}$, we have
\begin{equation}
    \hat{\boldsymbol{s}}=\boldsymbol{W}^H\boldsymbol{G}\boldsymbol{P}^{\frac{1}{2}}
  \boldsymbol{{s}} + \boldsymbol{W}^H{\boldsymbol{n}}+\boldsymbol{W^H}\boldsymbol{U}^{-1}\boldsymbol{q} \; ,
\end{equation}
and the $k$th element of $\hat{{\boldsymbol{s}}}$ is given by
\begin{equation}\label{kth user received signal}
  \hat{s}_k=\sqrt{p_k}{{\boldsymbol{w}}}_{k}^H{{\boldsymbol{g}}}_{k}s_k
  +
  \sum_{i=1,i\ne k}^{K}{\sqrt{p_k}{{{\boldsymbol{w}}}_{k}^H}{{\boldsymbol{g}}}_{i}s_i}+\\
  {\boldsymbol{w}}_{k}^H\boldsymbol{n}
  +\boldsymbol{w}_{k}^{H}\boldsymbol{U}^{-1}\boldsymbol{q},
\end{equation}
where $\boldsymbol{w}_k$ is the $k$th column of ${\boldsymbol{W}}$. We assume the BS treats ${{\boldsymbol{w}}}_{k}^H{{\boldsymbol{g}}}_{k}$ as the desired channel and the other terms of (\ref{kth user received signal}) as worst-case Gaussian noise when decoding the signal.
Consequently, a lower bound for the ergodic achievable SE at the $k$th user can be written as \cite{Matthaiou}
\begin{equation}\label{spec_eff}
	\mathcal{S}_k=\mathbb{E}\left[\mathrm{log}_2\left(1+\frac{p_k\left|\boldsymbol{w}_{k}^H\boldsymbol{g}_{k}\right|^2}{\Omega}\right)\right],
\end{equation}
where
\begin{equation}\label{intef_n_q}
\Omega=\\
  \sum_{i=1,i\ne k}^{K}{p_k\left|{{{\boldsymbol{w}}}_{k}^H}{{\boldsymbol{g}}}_{i}\right|^2}+\sigma_n^2\|\boldsymbol{w}_k\|^2+\boldsymbol{w}_{k}^{H}\boldsymbol{U}^{-1}\boldsymbol{R}_{\boldsymbol{q}}\boldsymbol{U}^{-H}\boldsymbol{w}_k.
\end{equation}

\subsection{MRC Receiver}\label{MRC}
For the case of an MRC receiver, $\boldsymbol{W}=\boldsymbol{G}$. The following proposition presents an approximation for the achievable SE of a massive MIMO system with spatial $\Sigma\Delta$ processing and an MRC receiver.

\begin{pro}\label{Pro3}
For a massive MIMO system employing a spatial $\Sigma\Delta$ architecture and an MRC receiver, the SE of the $k$th user assuming perfect CSI is given by eq.~(\ref{approx_theo}), where $\boldsymbol{\Sigma}_{ik}\triangleq\frac{1}{L}\boldsymbol{A}_i^H\boldsymbol{A}_k$.

\small
\begin{multline}\label{approx_theo}
\mathcal{S}_k\approx
\mathrm{log}_2\left(1+\frac{p_k\beta_k\left(\left|\mathrm{Tr}\left[\boldsymbol{\Sigma}_{kk}\right]\right| ^2 + \mathrm{Tr}\left[\boldsymbol{\Sigma}_{kk}^{2}\right]\right)}{\sum_{i=1,i\ne k}^{K}{p_i\beta_i\mathrm{Tr}\left[\boldsymbol{\Sigma}_{ik}\boldsymbol{\Sigma}_{ik}^H\right]}+\sigma_n^2\mathrm{Tr}\left[\boldsymbol{\Sigma}_{kk}\right]+\frac{4}{L}\left(\mathrm{Tr}\left[\boldsymbol{R}_{\boldsymbol{q}}\right]-\sigma_{q_M}^2\right)\sum_{\ell=1}^{L}{\mathrm{sin}^2\left(\frac{\phi-2\pi\frac{d}{\lambda}\mathrm{sin\left(\theta_{k\ell}\right)}}{2}\right)}+\sigma_{q_M}^2}\right)	
\end{multline}

\end{pro}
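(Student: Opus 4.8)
The plan is to apply the standard ``use-and-forget'' bounding technique common in massive MIMO SE analysis. First I would substitute the MRC choice $\boldsymbol{w}_k=\boldsymbol{g}_k$ into (\ref{spec_eff})--(\ref{intef_n_q}), and then replace the ergodic expectation of the logarithm by the deterministic approximation that pushes the expectation inside, $\mathbb{E}\left[\log_2\left(1+X/Y\right)\right]\approx\log_2\left(1+\mathbb{E}[X]/\mathbb{E}[Y]\right)$. This is precisely the approximation (justified via \cite{Matthaiou}) responsible for the ``$\approx$'' in (\ref{approx_theo}), and it reduces the task to evaluating four expectations over the fast-fading vectors $\boldsymbol{h}_i$ with the DoAs $\theta_{k\ell}$ (hence the matrices $\boldsymbol{A}_k$ and $\boldsymbol{\Sigma}_{ik}$) held fixed: the signal power $p_k\mathbb{E}\left[\|\boldsymbol{g}_k\|^4\right]$, the interference $\sum_{i\ne k}p_i\mathbb{E}\left[|\boldsymbol{g}_k^H\boldsymbol{g}_i|^2\right]$, the noise $\sigma_n^2\mathbb{E}\left[\|\boldsymbol{g}_k\|^2\right]$, and the shaped quantization term $\mathbb{E}\left[\boldsymbol{g}_k^H\boldsymbol{U}^{-1}\boldsymbol{R}_{\boldsymbol{q}}\boldsymbol{U}^{-H}\boldsymbol{g}_k\right]$.

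Next I would evaluate the signal, interference, and noise expectations using the channel model (\ref{channel model_phy}) together with complex-Gaussian quadratic-form identities. Writing $\boldsymbol{g}_k^H\boldsymbol{g}_k=\beta_k\boldsymbol{h}_k^H\boldsymbol{\Sigma}_{kk}\boldsymbol{h}_k$ and using $\mathbb{E}\left[(\boldsymbol{h}^H\boldsymbol{M}\boldsymbol{h})^2\right]=|\mathrm{Tr}[\boldsymbol{M}]|^2+\mathrm{Tr}[\boldsymbol{M}^2]$ for Hermitian $\boldsymbol{M}$ and $\boldsymbol{h}\sim\mathcal{CN}(\boldsymbol{0},\boldsymbol{I})$, the signal power becomes $p_k\beta_k^2\left(|\mathrm{Tr}[\boldsymbol{\Sigma}_{kk}]|^2+\mathrm{Tr}[\boldsymbol{\Sigma}_{kk}^2]\right)$. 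For the interference I would exploit the independence of $\boldsymbol{h}_k$ and $\boldsymbol{h}_i$, so that $\mathbb{E}\left[|\boldsymbol{g}_k^H\boldsymbol{g}_i|^2\right]=\frac{\beta_k\beta_i}{L^2}\mathrm{Tr}\left[\boldsymbol{A}_i^H\boldsymbol{A}_k\boldsymbol{A}_k^H\boldsymbol{A}_i\right]=\beta_k\beta_i\,\mathrm{Tr}[\boldsymbol{\Sigma}_{ik}\boldsymbol{\Sigma}_{ik}^H]$, while the noise term is immediate: $\sigma_n^2\mathbb{E}\left[\|\boldsymbol{g}_k\|^2\right]=\sigma_n^2\beta_k\,\mathrm{Tr}[\boldsymbol{\Sigma}_{kk}]$. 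Dividing both numerator and denominator by the common factor $\beta_k$ then reproduces the first three groups of terms in (\ref{approx_theo}).

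The main obstacle is the quantization term, which is where the noise-shaping $\sin^2$ factor must emerge. Because $\boldsymbol{U}^{-1}\boldsymbol{R}_{\boldsymbol{q}}\boldsymbol{U}^{-H}$ is deterministic, the quadratic-form identity collapses the expectation to $\frac{\beta_k}{L}\sum_{\ell=1}^{L}\boldsymbol{a}(u_{k\ell})^H\boldsymbol{U}^{-1}\boldsymbol{R}_{\boldsymbol{q}}\boldsymbol{U}^{-H}\boldsymbol{a}(u_{k\ell})$, so the crux is evaluating $\mathbb{E}\left[|\boldsymbol{a}(u)^H\boldsymbol{U}^{-1}\boldsymbol{q}|^2\right]$ for a single steering vector. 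Unlike the broadside computation in the proof of Proposition~\ref{rho-qsd-pro}, here I must retain the steering phase, using $\boldsymbol{U}^{-1}=\boldsymbol{I}_M-e^{-j\phi}\boldsymbol{Z}_{-1}$ together with the diagonal approximation $\boldsymbol{R}_{\boldsymbol{q}}=\mathrm{diag}(\boldsymbol{p}_{\boldsymbol{q}})$ already justified there. Collecting the coefficient of each $q_m$ in $\boldsymbol{a}(u)^H\left(\boldsymbol{I}_M-e^{-j\phi}\boldsymbol{Z}_{-1}\right)\boldsymbol{q}$ gives a factor $\left(1-e^{j(\omega_s-\phi)}\right)$ for $m<M$, where $\omega_s=2\pi\frac{d}{\lambda}u$, and a unit-modulus coefficient for $q_M$; taking $|\cdot|^2$ and applying $\left|1-e^{j(\omega_s-\phi)}\right|^2=4\sin^2\!\big(\tfrac{\omega_s-\phi}{2}\big)$ yields the per-path factor $4\sin^2\!\big(\tfrac{\phi-2\pi\frac{d}{\lambda}\sin\theta_{k\ell}}{2}\big)$ multiplying $\left(\mathrm{Tr}[\boldsymbol{R}_{\boldsymbol{q}}]-\sigma_{q_M}^2\right)$, plus the residual $\sigma_{q_M}^2$ contributed by the last antenna. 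Summing over the $L$ paths and dividing by $\beta_k$ produces exactly the quantization term in the denominator of (\ref{approx_theo}), and I would then assemble the four pieces to obtain the stated SINR. I expect the bookkeeping of the steering phase in this last step, and the care needed to separate the $q_M$ boundary term from the shaped bulk, to be the delicate part of the argument.
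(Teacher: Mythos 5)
Your proposal is correct and follows essentially the same route as the paper's proof: the same rate approximation from \cite{Matthaiou} with $\boldsymbol{w}_k=\boldsymbol{g}_k$, the same Gaussian quadratic-form moment evaluations for the signal, interference, and noise terms (which the paper obtains by citing Lemma 2 of \cite{Emil2} and Lemma 1 of \cite{Hessam_SPAWC} rather than stating the identities explicitly), and the same treatment of the quantization term via $\boldsymbol{U}^{-1}=\boldsymbol{I}_M-e^{-j\phi}\boldsymbol{Z}_{-1}$ with the diagonal $\boldsymbol{R}_{\boldsymbol{q}}$ approximation, yielding the $4\sin^2\bigl(\tfrac{\phi-\omega_s}{2}\bigr)$ shaping factor and the separate $\sigma_{q_M}^2$ boundary term. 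The only cosmetic difference is that you collapse the expectation over $\boldsymbol{h}_k$ into a trace against $\mathbb{E}\left[\boldsymbol{g}_k\boldsymbol{g}_k^H\right]$ before expanding per steering vector, whereas the paper expands $\boldsymbol{g}_k^H\boldsymbol{U}^{-1}\boldsymbol{q}$ directly over the $L$ paths; the two computations are equivalent.
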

\begin{proof}
From \cite{Matthaiou}, an approximation for (\ref{spec_eff}) can be calculated as
\begin{equation}\label{se-appr}
	\mathcal{S}_k\approx\mathrm{log}_2\left(1+\frac{p_k\mathbb{E}\left[\left|\boldsymbol{w}_{k}^H\boldsymbol{g}_{k}\right|^2\right]}{\mathbb{E}\left[\Omega\right]}\right).
\end{equation}
By setting $\boldsymbol{w}_k=\boldsymbol{g}_k$ and using Lemma 2 of \cite{Emil2} and Lemma 1 of \cite{Hessam_SPAWC}, the expected values of the desired signal, interference, and thermal noise can be readily calculated. 
For the quantization noise term, note that 
\begin{equation}
    \boldsymbol{U}^{-1}=\boldsymbol{I}_M - e^{-j\phi}\boldsymbol{Z}_{-1}.
\end{equation}
Therefore,
\begin{equation}
    \boldsymbol{U}^{-1}\boldsymbol{q}=\left(\boldsymbol{I}_M - e^{-j\phi}\boldsymbol{Z}_{-1}\right)\boldsymbol{q}=
    \begin{bmatrix}
           q_1 \\
           q_2-e^{-j\phi}q_1 \\
           \vdots \\
           q_M-e^{-j\phi}q_{M-1}
    \end{bmatrix}.
\end{equation}
In addition, the $k$th user channel vector can be written as
\begin{equation}
    \boldsymbol{g}_k=\sqrt{\frac{\beta_k}{L}}\sum_{l=1}^{L}{h_{kl}\boldsymbol{a}\left(\theta_{kl}\right)},
\end{equation}
where $h_{kl}$ is the $l$th element of $\boldsymbol{h}_k$. Hence, 
\begin{equation}
   \mathbb{E}\left[\left|\boldsymbol{g}_k^H\boldsymbol{U}^{-1}\boldsymbol{q}\right|^2\right]=\\
    \frac{\beta_k}{L}\mathbb{E}\left[\left|\sum_{\ell=1}^{L}{h_{kl}\left(1-e^{-j\phi}z_{kl}\right)\sum_{m=1}^{M-1}{q_m z_{kl}^{m-1}}+q_M z_{kl}^{M-1}}\right|^2\right],
\end{equation}
which, after some algebraic manipulation, leads to (\ref{approx_theo}) and the proof is complete.
\end{proof}

\begin{rem}\label{se-remark}
The noise shaping characteristic of the spatial $\Sigma\Delta$ architecture is explicitly manifested in~(\ref{approx_theo}). A similar characteristic is observed in \cite{ShaoMLS19} for $\Sigma\Delta$ precoding. It shows the importance of the design parameter $\phi$ which should be chosen to minimize $\mathcal{G}=\frac{1}{L}\sum_{\ell=1}^{L}{\mathrm{sin}^2\left(\frac{\phi-2\pi\frac{d}{\lambda}\mathrm{sin\left(\theta_{k\ell}\right)}}{2}\right)}$ for all users. By writing the steering angle as $\phi=2\pi\frac{d}{\lambda}\mathrm{sin\left(\theta\right)}$, we have
\begin{equation}\label{d-impact}
\mathcal{G}=\frac{1}{L}\sum_{\ell=1}^{L}{\mathrm{sin}^2\left(\pi\frac{d}{\lambda}\Bigl(\mathrm{sin}\left(\theta\right)-\mathrm{sin}\left(\theta_{k\ell}\right)\Bigr)\right)}. 
\end{equation}
Eq.~(\ref{d-impact}) indicates that $\mathcal{G}$ could be made arbitrarily small by decreasing the relative antenna spacing $d/\lambda$ (the {spatial oversampling gain}) or $\mathrm{sin}\left(\theta\right)-\mathrm{sin}\left(\theta_{k\ell}\right)$ (the {angle steering gain}). However, physical constraints on the antenna spacing and larger angular spreads, $\Theta$, limit the lower bound on $\mathcal{G}$. {For the case that $L\gg 1$, $\mathrm{sin}\left(\theta_{k\ell}\right)=u_{k\ell}$ can be taken as a random variable uniformly distributed in $\left[\delta_1,\delta_2\right]$ where $\delta_1=\mathrm{sin}\left(\theta_0-\frac{\Theta}{2}\right)$ and $\delta_2=\mathrm{sin}\left(\theta_0+\frac{\Theta}{2}\right)$. Hence,
\begin{equation}\label{Gphi}
    \mathcal{G}\approx\frac{1}{\delta_2-\delta_1}\int_{\delta_1}^{\delta_2}{\mathrm{sin}^2\left(\frac{\phi-2\pi\frac{d}{\lambda}u}{2}\right)du}=\\
    \frac{1}{2}+\frac{1}{4\pi}\left(\frac{d}{\lambda}\right)^{-1}\frac{1}{\delta_2-\delta_1}\left(b_0\mathrm{sin}\left(\phi\right)-b_1\mathrm{cos}\left(\phi\right)\right),
\end{equation}
where
\begin{equation*}
    b_0=\mathrm{cos}\left(2\pi\frac{d}{\lambda}\delta_2\right)-\mathrm{cos}\left(2\pi\frac{d}{\lambda}\delta_1\right)
\end{equation*}
\begin{equation*}
    b_1=\mathrm{sin}\left(2\pi\frac{d}{\lambda}\delta_2\right)-\mathrm{sin}\left(2\pi\frac{d}{\lambda}\delta_1\right).
\end{equation*}
In this case, the optimal value of the steering angle that minimizes $\mathcal{G}$ can be simply derived as 
\begin{equation} \label{phi_star}
\phi^{\star}=\begin{cases} 0 & \delta_2=-\delta_1 \\ -\mathrm{tan}^{-1}\left(\frac{b_0}{b_1}\right) & \text{otherwise}\end{cases}
\end{equation} 
which indicates that the optimal steering angle is dependent on $\delta_1$, $\delta_2$, and the relative antenna spacing $d/\lambda$.}
\end{rem}

\subsection{ZF receiver}\label{ZF-receiver}
For the ZF receiver, $\boldsymbol{W}=\boldsymbol{G}\left(\boldsymbol{G}^H\boldsymbol{G}\right)^{-1}$. After substituting this for $\boldsymbol{W}$ in~(\ref{se-appr}), the SE achieved for the $k$th user with the $\Sigma\Delta$ architecture and ZF receiver can be written as in~(\ref{ZF_SE}) at the top of the next page. Although~(\ref{ZF_SE}) does not provide direct insight into the effect of the shaped quantization noise on the SE, in Section~\ref{sec:Simulation} we numerically evaluate this expression and show the superior performance of the $\Sigma\Delta$ architecture compared with standard one-bit quantization.  
\begin{figure*}[!t]
\begin{equation}\label{ZF_SE}
\mathcal{S}_k=\mathbb{E}\left[\mathrm{log}_2\left(1+\frac{p_k}{\|\boldsymbol{w}_k\|^2\sigma_n^2+\left[\left(\boldsymbol{G}^H\boldsymbol{G}\right)^{-1}\boldsymbol{G}^H\boldsymbol{U}^{-1}\boldsymbol{R}_{\boldsymbol{q}}\boldsymbol{U}^{-H}\boldsymbol{G}\left(\boldsymbol{G}^{H}\boldsymbol{G}\right)^{-1}\right]_{kk}}\right)\right]
\end{equation}
\hrulefill
\vspace*{4pt}
\end{figure*}

\section{Numerical Results}\label{sec:Simulation}
In this section, we numerically evaluate the SE performance of $\Sigma\Delta$ massive MIMO systems in various scenarios. We assume static-aware power control in the network \cite{Emil Bj} so that $p_k=p_0/\beta_k$. In all of the cases considered, unless otherwise noted, we assume $M=100$ antennas, $K=10$ users, and an angular spread of $\Theta=40^{\circ}$ centered at $\theta_0=30^{\circ}$. We assume the same DoAs for all users, i.e., $\boldsymbol{A}_k=\boldsymbol{A},~\forall k$, drawn uniformly from the interval $\left[10^{\circ},50^{\circ}\right]$, which corresponds to $u=\sin(\theta)\in [0.17, 0.77]$, 
and the steering angle of the $\Sigma\Delta$ array is set to $\phi=2\pi\frac{d}{\lambda}\mathrm{sin}\left(\theta_0\right)$. The SNR is defined to be $\mathrm{SNR}\triangleq\frac{p_0}{\sigma_n^2}$. We further assume CSCG symbols and $10^4$ Monte Carlo trials for the simulations.

\begin{figure}
\centering
\includegraphics[width=1\textwidth]
{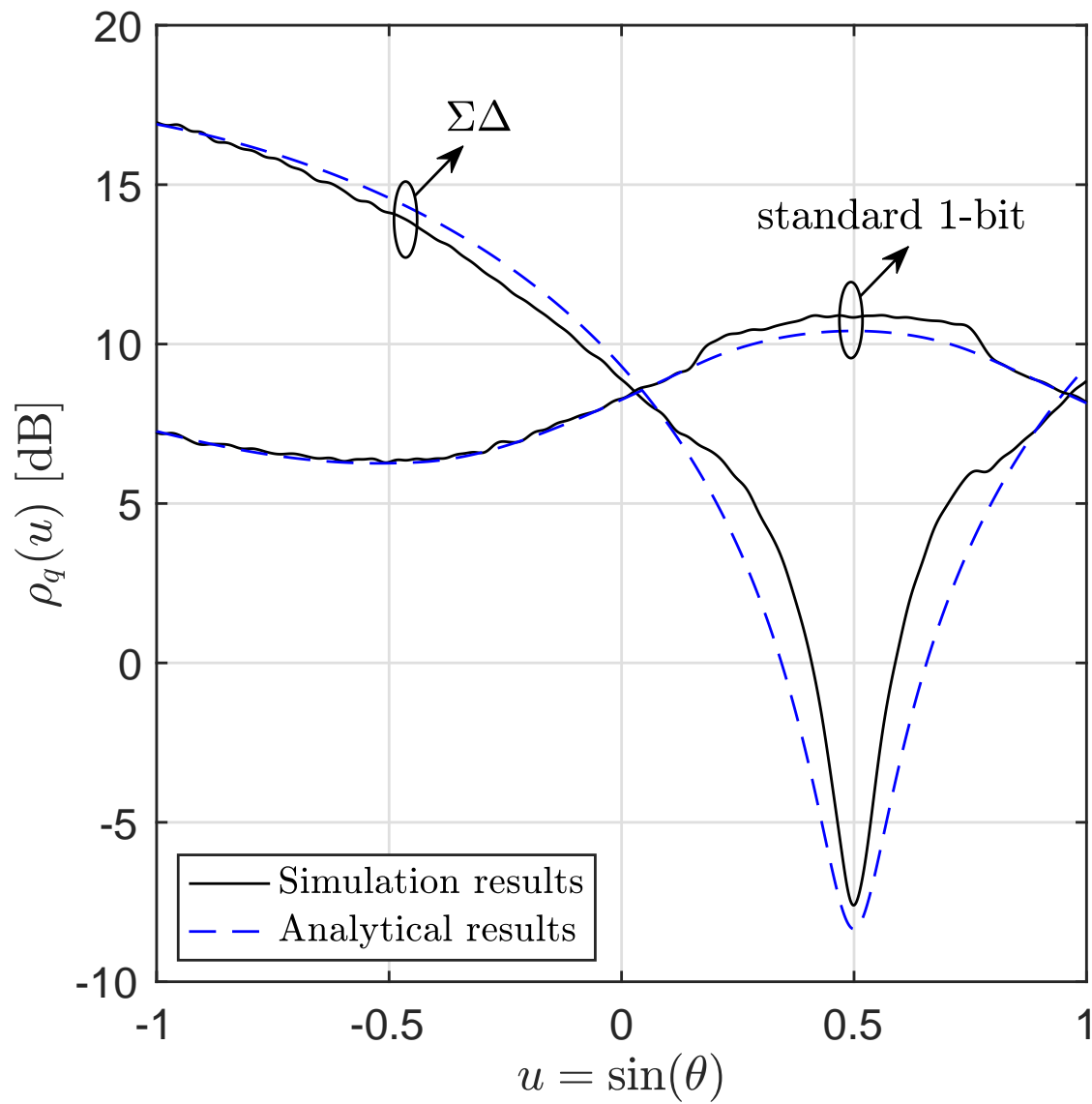}
\caption{Spatial spectrum of the quantization noise for the $\Sigma\Delta$ and standard one-bit architectures when $L=50$, $d=\lambda/4$, and $\mathrm{SNR}=0~\mathrm{dB}$.}
\label{fig1}
\end{figure}
Fig. \ref{fig1} shows the simulated and analytically derived quantization noise power density, i.e., $\rho_q\left(u\right),~u\in\left[-1,1\right]$, for $\Sigma\Delta$ and standard one-bit quantization when the relative antenna spacing is $d=\lambda/4$. We see that the quantization noise power for the $\Sigma\Delta$ array is substantially lower over the angles where the users are present, while the effect is the opposite for standard one-bit quantization -- the quantization noise is higher for angles where the amplitude of the received signals is larger. We also observe that there is excellent agreement between the simulations and our theoretically derived expressions for both cases. Note that the careful design of the quantizer output levels is a critical aspect for achieving the desired $\Sigma\Delta$ noise shaping characteristic shown here.

\begin{figure}
\centering
\includegraphics[width=1\textwidth]
{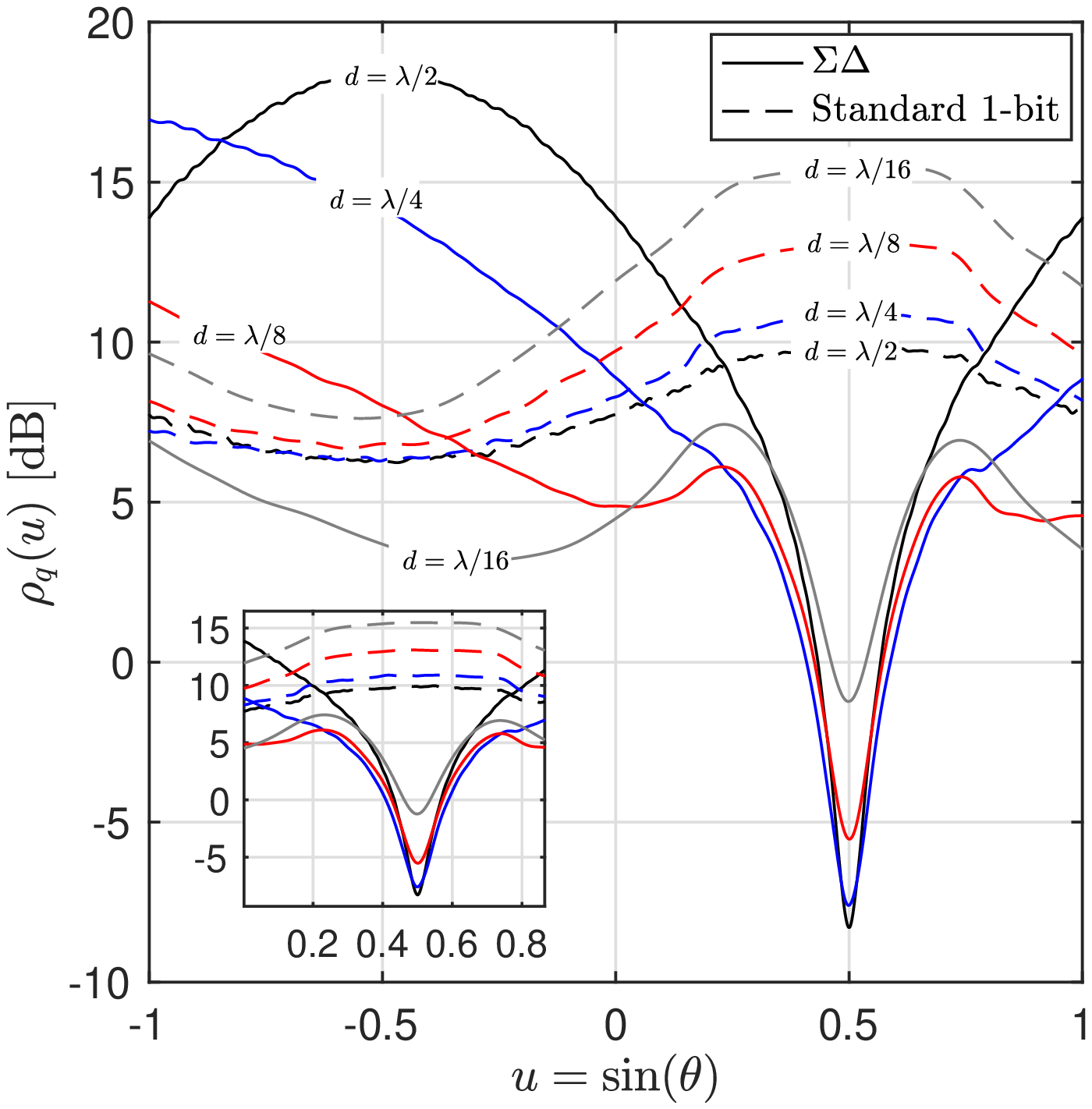}
\caption{Spatial spectrum of the quantization noise for the $\Sigma\Delta$ and standard one-bit architectures for different antenna spacings when $L=50$ and $\mathrm{SNR}=0~\mathrm{dB}$.}
\label{fig2}
\end{figure}
The impact of spatial oversampling on the shape of the quantization noise spectrum is illustrated in Fig.~\ref{fig2}. {We see from the figure that, as discussed in Remarks~\ref{remark2}, the quantization noise power for the standard one-bit ADC architecture grows as $d/\lambda$ decreases.}
Analogously to temporal $\Sigma\Delta$ modulation where increasing the sampling rate helps to push the quantization noise to higher frequencies and widen the quant\-iza\-tion-noise-free band, we can reduce the quantization noise power over wider angular regions by placing the antenna elements of the array closer together. For example, when $d=\lambda/2$, the $\Sigma\Delta$ quantization noise power is below that of the standard one-bit quantizer over a beamwidth of $40^{\circ}$. This beamwidth increases to about $80^{\circ}$, $150^{\circ}$, and $180^{\circ}$ for $d=\lambda/4$, $d=\lambda/8$, and $d=\lambda/16$, respectively. Mutual coupling will impact these results as $d$ decreases, but both the standard one-bit and $\Sigma\Delta$ approaches would be expected to degrade.

\begin{figure}
\centering
\includegraphics[width=1\textwidth]
{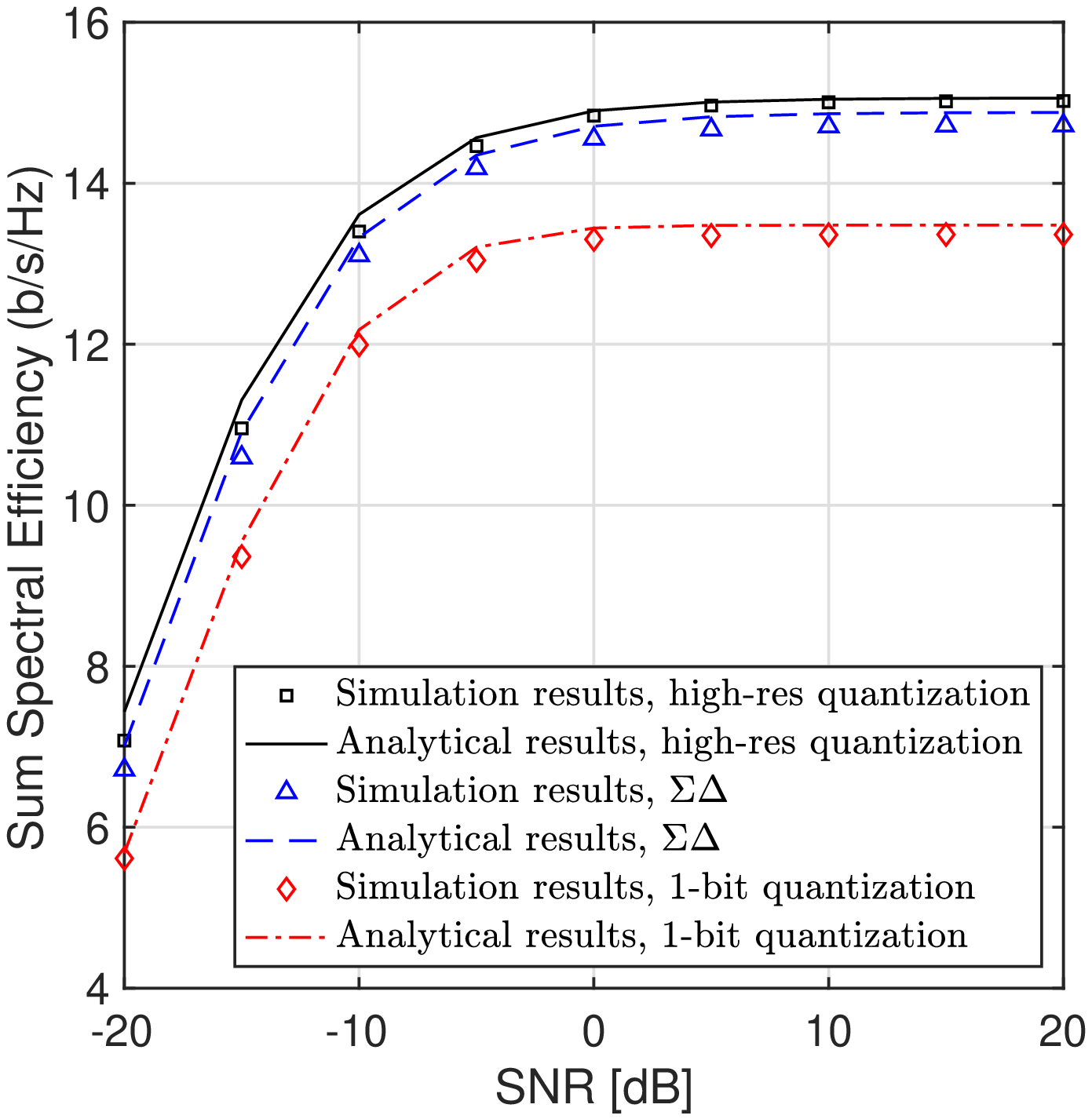}
\caption{SE versus SNR for MRC receiver with perfect CSI, $L=50$, and $d=\lambda/4$.}
\label{fig3}
\end{figure}

In Fig. \ref{fig3}, we compare the SE performance of $\Sigma\Delta$ and standard one-bit quantization for the case of an MRC receiver. It is clear that the derived theoretical SE expression in (\ref{approx_theo}) very closely matches the simulated value of the expression in (\ref{spec_eff}). The one-bit $\Sigma\Delta$ implementation achieves a significantly increased SE compared with standard one-bit quantization, and performs nearly identically to an MRC receiver with infinite resolution ADCs. It should be emphasized that this performance gain is achieved without paying a significant penalty in terms of power consumption (as with mixed-ADC architectures) or complicated processing (as required by non-linear receivers).

\begin{figure}
\centering
\includegraphics[width=1\textwidth]
{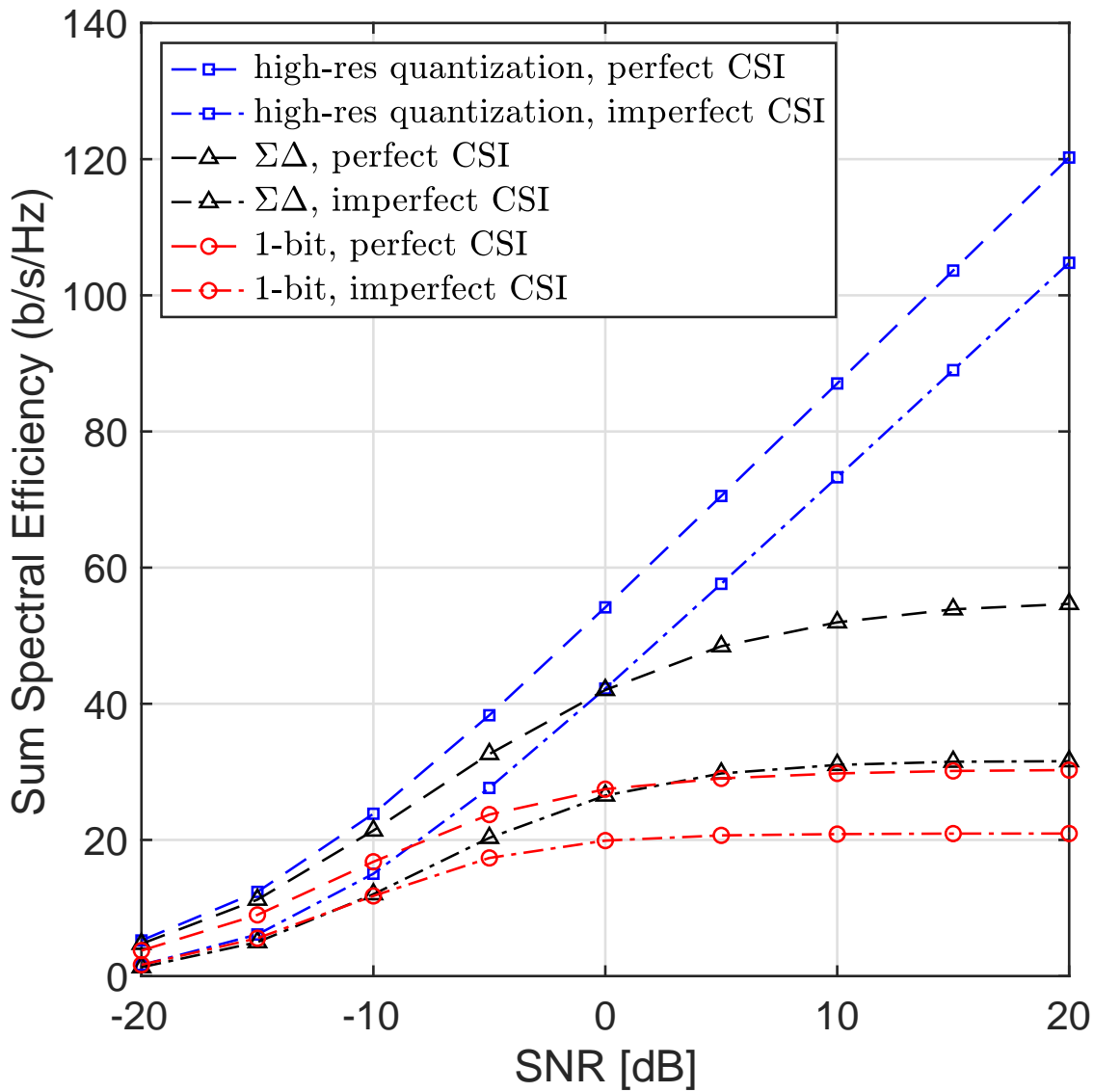}
\caption{SE versus SNR for ZF receiver with and without channel estimation error. $L=20$, $d=\lambda/4$.}
\label{fig4}
\end{figure}

In Fig. \ref{fig4} and \ref{fig5}, we numerically evaluate the SE when the ZF receiver is employed, using Eq.~(\ref{ZF_SE}). The SE improvement of $\Sigma\Delta$ processing is much greater than for the case of MRC. For example, at $\mathrm{SNR}=0~\mathrm{dB}$, about a $50\%$ improvement in SE can be achieved by the spatial $\Sigma\Delta$ architecture compared with standard one-bit quantization, which confirms its ability to provide high SE with a simple architecture and low power consumption.

The effect of channel estimation error on the performance of the algorithms is also shown in Fig. \ref{fig4} for the ZF receiver. For these results, we used a least squares (LS) channel estimator for each of the algorithms. In this approach, the channel estimate, $\hat{\boldsymbol{G}}$, becomes
\begin{equation}\label{CE}
    \hat{\boldsymbol{G}}=\frac{1}{\eta\sqrt{ p_0}}\boldsymbol{P}_{\boldsymbol{A}}\boldsymbol{Y}\boldsymbol{\Phi}^{*},
\end{equation}
where $\eta$ is the training length, $\boldsymbol{P}_{\boldsymbol{A}}=\boldsymbol{A}\boldsymbol{A}^{\dagger}$ is the orthogonal projection onto $\boldsymbol{A}$, $\boldsymbol{Y}\in\mathbb{C}^{M\times\eta}$ is the received data during the channel estimation phase, and $\boldsymbol{\Phi}\in\mathbb{C}^{\eta\times K}$ is the orthogonal pilot matrix satisfying $\boldsymbol{\Phi}^H\boldsymbol{\Phi}=\eta\boldsymbol{I}$. We set $\eta=K$ and choose $\boldsymbol{\Phi}$ from among the columns of the discrete Fourier transform (DFT) matrix. Note that for the case of high-resolution quantization, $\boldsymbol{Y}=\sqrt{p_0}\boldsymbol{G}\boldsymbol{\Phi}^T+\boldsymbol{N}$, where the elements of $\boldsymbol{N}$ are independent $\mathcal{CN}\left(0,1\right)$ random variables. For standard one-bit and $\Sigma\Delta$ quantization, we pass $\boldsymbol{Y}$ through the corresponding quantization, and plug the output into (\ref{CE}) for channel estimation.
Fig. \ref{fig5} shows the performance of the ZF receiver with and without perfect CSI versus the number of antennas. The presence of imperfect CSI obviously degrades all of the algorithms, but we see that the $\Sigma\Delta$ architecture provides a way to successfully bridge the performance gap between standard one-bit and high-resolution quantization with only a minimal increase in hardware complexity.

\section{Conclusion}\label{sec:concolusion}
In this paper, we studied the performance of massive MIMO systems employing spatial one-bit $\Sigma\Delta$ quantization. Using an element-wise Bussgang approach, we derived an equivalent linear model in order to analytically characterize the spectral efficiency of a massive MIMO base station with a $\Sigma\Delta$ array, and we compared the results with the performance achieved by an array that employs standard one-bit quantization. Our results demonstrated that the spatial $\Sigma\Delta$ architecture can scale down the quantization noise power proportional to the square of the spatial oversampling rate. This can be interpreted as scaling down the quantization noise power proportional to the inverse square of the number of antennas at the BS for space-constrained arrays. This result gains more importance by noting that in standard one-bit quantization, the quantization noise power grows proportional to the inverse of the spatial oversampling rate, or equivalently, proportional to the number of antennas at the BS in space-constrained arrays. Furthermore, it was shown how this capability allows the spatial $\Sigma\Delta$ architecture to bridge the SE gap between infinite resolution and standard one-bit quantized systems. For the ZF receiver, the spatial $\Sigma\Delta$ architecture can outperform standard one-bit quantization by about $50\%$, and achieve almost the same performance as an infinite resolution system for the MRC receiver. While these results were obtained by assuming the availability of perfect CSI at the BS, we also showed that the spatial $\Sigma\Delta$ architecture is able to alleviate the adverse impact of quantization noise in the presence of channel estimation error.

\begin{figure}
\centering
\includegraphics[width=1\textwidth]
{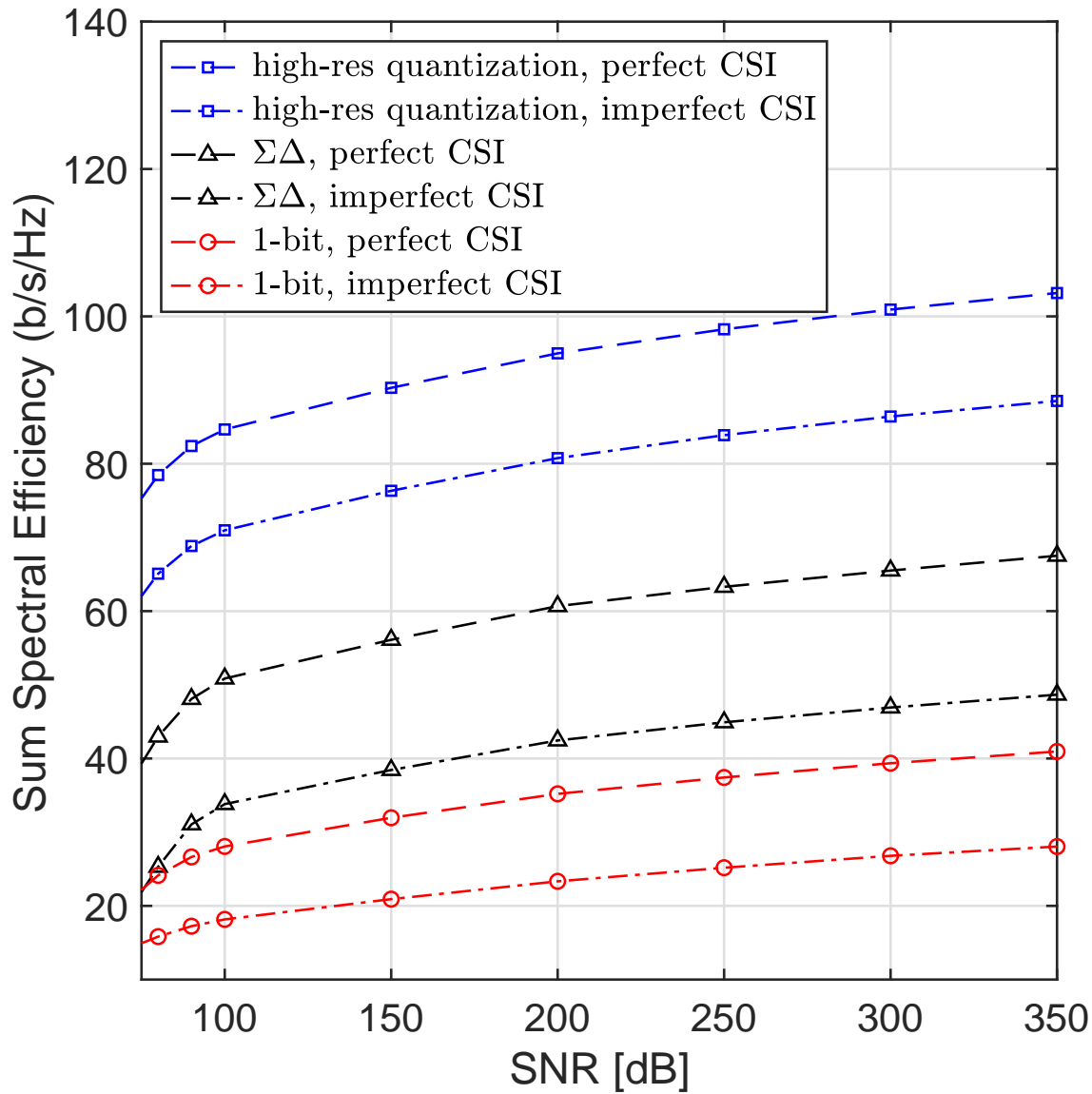}
\caption{SE versus $M$ for ZF receiver with and without channel estimation error. $L=15$, $d=\lambda/4$, $\mathrm{SNR}=10$ dB.}
\label{fig5}
\end{figure}


\begin{thebibliography}{1}

\bibitem{Fan}
L. Fan, S. Jin, C. Wen, and V.  Zhang,
\newblock ``Uplink achievable rate for massive MIMO systems with low-resolution ADCs,''
\newblock {\em IEEE Commun. Lett.}, vol. 19, no. 12, pp. 2186-2189, Dec. 2015.

\bibitem{JZhang}
J. Zhang, L. Dai, S. Sun, and Z.  Wang,
\newblock ``On the spectral efficiency of massive MIMO systems with low-resolution ADCs,''
\newblock {\em IEEE Commun. Lett.}, vol. 20, no. 5, pp. 842-845, May. 2016.

\bibitem{juncil2015near}
J.~Choi, J.~Mo, and R.~W. Heath, ``Near maximum-likelihood detector and channel
  estimator for uplink multiuser massive {MIMO} systems with one-bit {ADCs},''
  \emph{IEEE Trans. Commun.}, vol.~64, no.~5, pp. 2005--2018, May 2016.

\bibitem{MollenCLH17}
C.~Moll\'{e}n, J.~Choi, E.~G. Larsson, and R.~W. Heath, ``Uplink performance of
  wideband massive {MIMO} with one-bit {ADCs},'' \emph{IEEE Trans. Wireless
  Commun.}, vol.~16, no.~1, pp. 87--100, Jan 2017.

\bibitem{Li}
Y. Li, C. Tao, L. Liu, A. Mezghani, G. Seco-Granados, and A. Swindlehurst,
\newblock ``Channel estimation and performance analysis of one-bit massive MIMO systems,''
\newblock {\em IEEE Trans. Signal Process.}, vol. 65, no. 15, pp. 4075-4089, May 2017.

\bibitem{JacobssonDCGS17}
S.~Jacobsson, G.~Durisi, M.~Coldrey, U.~Gustavsson, and C.~Studer, ``Throughput
  analysis of massive {MIMO} uplink with low-resolution {ADCs},'' \emph{IEEE
  Trans. Wireless Commun.}, vol.~16, no.~6, pp. 4038--4051, June 2017.
  
\bibitem{Dong}
P. Dong, \emph{et al.},
\newblock ``Performance analysis of multiuser massive MIMO with spatially correlated channels using low-precision ADC,''
\newblock {\em IEEE Commun. Lett.}, vol. 22, no. 1, pp. 205-208, Jan. 2018.

\bibitem{ShaoLL18}
Z.~Shao, R.~C. de~Lamare, and L.~T.~N. Landau, ``Iterative detection and
  decoding for large-scale multiple-antenna systems with 1-bit {ADCs},''
  \emph{IEEE Wireless Commun. Lett.}, vol.~7, no.~3, pp. 476--479, June 2018.

\bibitem{JeonLHH18}
Y.~Jeon, N.~Lee, S.~Hong, and R.~W. Heath, ``One-bit sphere decoding for uplink
  massive {MIMO} systems with one-bit {ADCs},'' \emph{IEEE Trans.\ Wireless
  Commun.}, vol.~17, no.~7, pp. 4509--4521, July 2018.

\bibitem{Nossek2009Transmit}
A.~Mezghani, R.~Ghiat, and J.~A. Nossek, ``Transmit processing with low
  resolution {D/A}-converters,'' in \emph{Proc. 16th IEEE Int. Conf. Electron.,
  Circuits, Syst.}, Dec 2009, pp. 683--686.

\bibitem{SaxenaFS17}
A.~K. Saxena, I.~Fijalkow, and A.~Swindlehurst, ``Analysis of one-bit quantized
  precoding for the multiuser massive {MIMO} downlink,'' \emph{IEEE Trans.\
  Signal Process.}, vol.~65, no.~17, pp. 4624--4634, Sept 2017.

\bibitem{LiTSML17}
Y.~Li, C.~Tao, A.~L. Swindlehurst, A.~Mezghani, and L.~Liu, ``Downlink
  achievable rate analysis in massive {MIMO} systems with one-bit {DACs},''
  \emph{IEEE Commun. Lett.}, vol.~21, no.~7, pp. 1669--1672, July 2017.

\bibitem{CastanedaJDCGS17}
O.~Castaneda, S.~Jacobsson, G.~Durisi, M.~Coldrey, T.~Goldstein, and C.~Studer,
  ``{1-bit massive MU-MIMO precoding in VLSI},'' \emph{IEEE J. Emerg. Sel.
  Topics Circuits Syst.}, vol.~7, no.~4, pp. 508--522, Dec 2017.

\bibitem{Jacobsson2017}
S.~Jacobsson, G.~Durisi, M.~Coldrey, T.~Goldstein, and C.~Studer, ``{Quantized
  precoding for massive MU-MIMO},'' \emph{IEEE Trans. Commun.}, vol.~65,
  no.~11, pp. 4670--4684, Nov 2017.

\bibitem{Swindlehurst2017}
A.~Swindlehurst, A.~Saxena, A.~Mezghani, and I.~Fijalkow, ``{Minimum
  probability-of-error perturbation precoding for the one-bit massive {MIMO}
  downlink},'' in \emph{Proc. IEEE Int. Conf. Acous., Speech, Signal Process.
  (ICASSP)}, Mar. 2017, pp. 6483--6487.

\bibitem{LandauL17}
L.~T.~N. Landau and R.~C. de~Lamare, ``Branch-and-bound precoding for multiuser
  {MIMO} systems with 1-bit quantization,'' \emph{IEEE Wireless Commun. Lett.},
  vol.~6, no.~6, pp. 770--773, Dec 2017.

\bibitem{JeddaMSN18}
H.~Jedda, A.~Mezghani, A.~L. Swindlehurst, and J.~A. Nossek, ``{Quantized
  constant envelope precoding with PSK and QAM signaling},'' \emph{IEEE Trans.
  Wireless Commun.}, vol.~17, no.~12, pp. 8022--8034, Dec 2018.

\bibitem{li2018massive}
A.~Li, C.~Masouros, F.~Liu, and A.~L. Swindlehurst, ``{Massive MIMO 1-bit DAC
  transmission: A low-complexity symbol scaling approach},'' \emph{IEEE Trans.
  Wireless Commun.}, vol.~17, no.~11, pp. 7559--7575, 2018.

\bibitem{Sohrabi2018}
F.~Sohrabi, Y.-F. Liu, and W.~Yu, ``One-bit precoding and constellation range
  design for massive {MIMO} with {QAM} signaling,'' \emph{IEEE J. Sel. Topics
  Signal Process.}, vol.~12, no.~3, pp. 557--570, 2018.

\bibitem{shao2018framework}
M.~Shao, Q.~Li, W.-K. Ma, and A.~M.-C. So, ``{A framework for one-bit and
  constant-envelope precoding over multiuser massive {MISO} channels},''
  \emph{arXiv preprint arXiv:1810.03159}, 2018.
  
\bibitem{Tan}
W. Tan, S. Jin, C. Wen and Y. Jing,
\newblock ``Spectral efficiency of mixed-ADC receivers for massive MIMO systems,''
\newblock {\em IEEE Access}, vol. 4, pp. 7841-7846, Aug. 2016.

\bibitem{JZhang2}
J. Zhang, L. Dai, Z. He, S. Jin, and X.  Li,
\newblock ``Performance analysis of mixed-ADC massive MIMO systems over Rician fading channels,''
\newblock { \em IEEE J. Sel. Areas in Commun.}, vol. 35, no. 6, pp. 1327-1338, June 2017.

\bibitem{Hessam_lett}
H. Pirzadeh, and A. L. Swindlehurst,
\newblock ``Spectral efficiency under energy
constraint for mixed-ADC MRC massive MIMO,''
\newblock {\em IEEE Sig. Process. Lett.}, vol. 24, no. 12, pp. 1847-1851, Oct. 2017.

\bibitem{Park}
J. Park, S. Park, A. Yazdan and R. W. Heath
\newblock ``Optimization of Mixed-ADC multi-antenna systems for Cloud-RAN deployments,''
\newblock {\em IEEE Trans. Commun.}, vol. 65, no. 9, pp. 3962-3975, Sep. 2017.

\bibitem{Ucuncu2018}
A.~B. \"{U}\c{c}\"{u}nc\"{u} and A.~\"{O}. Y{i}lmaz,
\newblock ``{Oversampling in One-Bit Quantized Massive MIMO Systems and
  Performance Analysis},''
\newblock {\em IEEE Trans.\ Wireless Commun.}, 2018 (to appear).

\bibitem{gok2017}
A.~Gokceoglu, E.~Bj\"{o}rnson, E.~G. Larsson, and M.~Valkama,
\newblock ``{Spatio-Temporal Waveform Design for Multiuser Massive MIMO
  Downlink With 1-bit Receivers},''
\newblock {\em IEEE {J.} Sel.\ Topics Sig. Proc.}, vol. 11, no. 2, pp.
  347--362, March 2017.

\bibitem{schluter2018}
M.~Schluter, M.~Dorpinghaus, and G.~P. Fettweis,
\newblock ``Bounds on channel parameter estimation with 1-bit quantization and
  oversampling,''
\newblock in {\em International Workshop on Signal Processing Advances in
  Wireless Communications (SPAWC)}, June 2018.
  
\bibitem{Galton_rev}
I. Galton and H. T. Jensen,
\newblock ``Delta-sigma based A/D conversion without
oversampling,''
\newblock {\em IEEE Trans. Circuits Syst. II}, vol. 42, pp. 773–784, Dec. 1995.

\bibitem{Palguna}
D. S. Palguna, D. J. Love, T. A. Thomas, and A. Ghosh,
\newblock ``Millimeter
wave receiver design using low precision quantization and parallel $\Delta\Sigma$ architecture,''
\newblock {\em IEEE Trans. Wireless Commun.}, vol. 15, no. 10, pp. 6556–6569, Oct. 2016.
  

\bibitem{aziz1996overview}
P.~M. Aziz, H.~V. Sorensen, and J.~Van Der~Spiegel, ``An overview of
  sigma-delta converters: How a 1-bit {ADC} achieves more than 16-bit
  resolution,'' \emph{IEEE Signal Process. Mag.}, vol.~13, no.~1, pp. 61--84,
  1996.

\bibitem{Baird}
R. T. Baird and T. S. Fiez, ``Linearity enhancement of multi-bit
A/D and D/A converters using data weighted averaging,'' \emph{IEEE Trans.
Circuits Syst.}, vol.~42, pp. 753–762, Dec. 1995.

\bibitem{Galton}
I. Galton, ``Delta–sigma data conversion in wireless transceivers,'' \emph{IEEE
Trans. Microw. Theory Tech.}, vol.~50, no.~1, pp. 302–315, Jan. 2002.


\bibitem{scholnik2004spatio}
D.~P. Scholnik, J.~O. Coleman, D.~Bowling, and M.~Neel, ``Spatio-temporal
  delta-sigma modulation for shared wideband transmit arrays,'' in \emph{Proc.
  IEEE Radar Conf.}, 2004, pp. 85--90.

\bibitem{kriegerDense}
J.~D. Krieger, C.~P. Yeang, and G.~W. Wornell, ``Dense delta-sigma phased
  arrays,'' \emph{{IEEE} Trans. Antennas Propag.}, vol.~61, no.~4, pp.
  1825--1837, April 2013.
  
\bibitem{Venk2011}
V.~Venkateswaran and A.~van~der Veen, ``Multichannel {$\Sigma \Delta$ ADCs}
  with integrated feedback beamformers to cancel interfering communication
  signals,'' \emph{IEEE Trans.\ Signal Process.}, vol.~59, no.~5, pp.
  2211--2222, May 2011.
  
\bibitem{nikoofard}
A.~Nikoofard, J.~Liang, M.~Twieg, S.~Handagala, A.~Madanayake, L.~Belostotski,
  and S.~Mandal, ``Low-complexity {N-port ADCs} using {2-D} sigma-delta
  noise-shaping for {N-element} array receivers,'' in \emph{Proc. Int. Midwest
  Symposium Circuits Syst. (MWSCAS)}, 2017, pp. 301--304.

\bibitem{madan2017}
A.~Madanayake, N.~Akram, S.~Mandal, J.~Liang, and L.~Belostotski, ``Improving
  {ADC} figure-of-merit in wideband antenna array receivers using
  multidimensional space-time delta-sigma multiport circuits,'' in \emph{Proc.
  Int. Workshop Multidimensional (nD) Syst. (nDS)}, Sept 2017.
  
\bibitem{Corey_Sig}
R.~M. Corey and A.~C. Singer, ``Spatial sigma-delta signal acquisition for
  wideband beamforming arrays,'' in \emph{Proc. Int. ITG Workshop Smart
  Antennas (WSA)}, March 2016.

\bibitem{baracspatial}
D.~Barac and E.~Lindqvist, ``Spatial sigma-delta modulation in a massive {MIMO}
  cellular system,'' Master's thesis, Department of Computer Science and
  Engineering, Chalmers University of Technology, 2016.
  
\bibitem{RaoSP19}
S.~Rao, H.~Pirzadeh and A.~Swindlehurst, ``{Massive {MIMO} channel estimation with 1-bit
  spatial Sigma-Delta ADCs},'' in \emph{Proc. IEEE Int. Conf. Acous., Speech, Signal Process.
  (ICASSP)}, May 2019, pp. 4484--4488.
  
\bibitem{ShaoMLS19}
M.~Shao, W.-K. Ma, Q.~Li and A.~Swindlehurst, ``{One-Bit Sigma-Delta MIMO Precoding},''
  \emph{arXiv preprint arXiv:1903.03319}, 2019.

\bibitem{Ngo_PHY}
H. Q. Ngo, E. G. Larsson, and T. L. Marzetta,
\newblock ``The multicell multiuser MIMO uplink with
very large antenna arrays and a
finite-dimensional channel,''
\newblock {\em IEEE Trans. Commun.}, vol. 61, no. 6, pp. 2350-2361, June 2013.

  
\bibitem{Bussgang}
J. J. Bussgang,
\newblock ``Crosscorrelation functions of amplitude-distorted Gaussian signals,'' Res. Lab. Electron., Massachusetts Inst. Technol., Cambridge, MA, USA, Tech. Rep. 216, 1952.



\bibitem{Booton}
R.~Booton, 
\newblock ``Nonlinear control systems with random inputs,'' in \newblock {\em IRE Trans. Circuit Theory}, vol.~1, no.~1, pp.~9--18, March 1954.



\bibitem{Barret}
F. J. Barrett and D. G. Lampard,
\newblock ``An expansion for some second order probability distributions and its application to noise problems,''
\newblock {\em  IRE Trans. Inform. Theory}, vol. 1, no. 1, pp. 10-15, Mar. 1955.



\bibitem{Oppenheim}
Alan V. Oppenheim, and Ronald W. Schafer,
\newblock {\em Discrete-Time Signal Processing}. Upper Saddle River, NJ, USA: Prentice-Hall, 2009.

\bibitem{Vleck66}
J.~Van Vlack and D.~Middleton,
\newblock ``The spectrum of clipped noise,'' in \newblock {\em Proc. IEEE}, vol.~54, no.~1, pp.~1--19, Jan. 1966.

\bibitem{Jacovitti94}
G.~Jacovitti and A.~Neri,
\newblock ``Estimation of the autocorrelation function of complex Gaussian stationary processes by amplitude clipped signals,'' in \newblock {\em IEEE Trans. Info. Theory}, vol.~40, no.~1, pp.~239--245, Jan. 1994.

\bibitem{Masouros}
C. Masouros and M. Matthaiou,
\newblock ``Space-constrained massive MIMO:
Hitting the wall of favorable propagation,''
\newblock {\em IEEE Commun. Lett.}, vol. 19, no. 5, pp. 771-774, May 2015.

\bibitem{Matthaiou}
Q. Zhang, S. Jin, K-K. Wang, H. Zhu, and M. Matthaiou
\newblock ``Power scaling of uplink massive MIMO systems with arbitrary-rank channel means,''
\newblock {\em IEEE Journal of Sel. Topics Sig. Process.}, vol. 8, no. 5, pp. 966-981, Oct. 2014.

\bibitem{Emil2}
E. Bj\"{o}rnson, M. Matthaiou, and M. Debbah,
\newblock ``Massive MIMO with non-ideal arbitrary arrays: Hardware scaling laws and circuit-aware design,''
\newblock {\em IEEE Trans. Wireless Commun.}, vol. 14, no. 8, pp. 4353-4368, Aug. 2015.

\bibitem{Hessam_SPAWC}
H. Pirzadeh and A. Lee Swindlehurst,
\newblock ``Space-Constrained Mixed-ADC Massive MIMO,''
in \newblock {\em Proc. IEEE Int. Workshop on Signal Processing Advances in Wireless Communications (SPAWC), July 2019}.

\bibitem{Emil Bj}
E. Bj\"{o}rnson, E. G. Larsson, and M. Debbah,
\newblock ``Massive MIMO for maximal spectral efficiency: How many users and pilots should be allocated?,''
\newblock {\em IEEE Trans. Wireless Commun.}, vol. 15, no. 2, pp. 1293-1308, Feb. 2016.

\end{thebibliography}
\end{document}